\newtheorem{theorem}{Theorem}
\newtheorem{definition}{Definition}
\newtheorem{corollary}{Corollary}
\newtheorem{condition}{Condition}
\newtheorem{example}{Example}
\begin{document}
%
\title{Kernel Center Adaptation in the Reproducing Kernel Hilbert Space  Embedding Method}
%
%
%

\author{Sai~Tej~Paruchuri, 
        Jia~Guo, 
        and~Andrew~Kurdila 
\thanks{The authors are with the Mechanical Engineering Department, Virginia Tech, Blacksburg,
VA, 24061 USA e-mail: saitejp@vt.edu, jguo18@vt.edu, kurdila@vt.edu.}
}

%
%

\markboth{Journal of \LaTeX\ Class Files,~Vol.~14, No.~8, August~2015}%
{Shell \MakeLowercase{\textit{et al.}}: Bare Demo of IEEEtran.cls for IEEE Journals}
%



\maketitle

\begin{abstract}
The performance of adaptive estimators that employ embedding in reproducing kernel Hilbert spaces (RKHS) depends on the choice of the location of basis kernel centers. Parameter convergence and error approximation rates depend on where and how the kernel centers are distributed in the state-space. In this paper, we develop the theory that relates parameter convergence and approximation rates to the position of kernel centers. We develop criteria for choosing kernel centers in a specific class of systems - ones in which the state trajectory regularly visits the neighborhood of the positive limit set. Two algorithms, based on centroidal Voronoi tessellations and Kohonen self-organizing maps, are derived to choose kernel centers in the RKHS embedding method. Finally, we implement these methods on two practical examples and test their effectiveness.  
\end{abstract}

\begin{IEEEkeywords}
Reproducing Kernel Hilbert Space, Adaptive Estimation, Persistence of Excitation, Kohonen Self-organizing maps, Centroidal Voronoi Tessellations, Lloyd's algorithm.
\end{IEEEkeywords}

%
\IEEEpeerreviewmaketitle

\section{Introduction}
%
%
%
%
\IEEEPARstart{A}{daptive} estimation of unknown nonlinearities appearing in dynamical systems is a topic that has been studied over the past four decades. The finite-dimensional versions of such problems are described in classical texts like \cite{Ioannou, Sastry2011, Narendra2012}. The goal of these methods is to estimate an unknown term appearing in the governing ordinary differential equations (ODEs). A common assumption in such problems is that all the states are available for measurement. Many of these methods also assume that the unknown function belongs to some hypothesis space of functions. The particular class of adaptive estimators studied in this paper assumes that the hypothesis space is a reproducing kernel Hilbert space (RKHS). An RKHS $\mathcal{H}_{\mathbb{R}^d}$ is a Hilbert space of functions on the state-space $\mathbb{R}^d$ that is defined in terms of a positive-definite kernel $\mathcal{K}: \mathbb{R}^d \times \mathbb{R}^d \to \mathbb{R}$. An example of an RKHS is the space generated by the Gaussian radial basis kernels that have the form $\mathcal{K}(x,y) := e^{ \zeta \| x - y\|^2}$, where $\zeta$ is positive. The additional structure induced by the kernel $\mathcal{K}$ on $\mathcal{H}_{\mathbb{R}^d}$ enables the proof of crucial convergence results, even for the infinite-dimensional cases. The finite-dimensional version of the RKHS adaptive estimators have been studied in \cite{Kurdila1995, Kingravi2012a}. However, the results for the infinite-dimensional adaptive estimation cases are relatively new and were investigated by Bobade et al. in \cite{Bobade2019}. 

In both the finite and infinite-dimensional cases, the unknown function $f \in \mathcal{H}$ has the form $f(\cdot) = \sum_i \alpha_i \mathfrak{K}_{\bm{x}_i} (\cdot)$, where $\mathfrak{K}_{\bm{x}_i} (\cdot) := \mathcal{K}(\bm{x}_i, \cdot)$ with $\bm{x}_i \in \mathbb{R}^d$. Note, the index $i \in \{1,\ldots,n\}$ for the $n$-dimensional case while $i \in \mathbb{N}$ for the infinite-dimensional case. We refer to $\mathfrak{K}_{\bm{x}_i} \in \mathcal{H}$ as the kernel function centered at $\bm{x}_i$ or the regressor function. Thus, we express the unknown function $f$ as a linear combination of kernels centered at different points in the state-space. When the set of centers are fixed or held constant, the analysis in \cite{Ioannou, Sastry2011, Narendra2012} are applicable. This paper specifically studies how such centers can be chosen adaptively in the RKHS embedding method. 

The general problem of center selection is familiar in both adaptive estimation and in machine learning methods based on radial basis functions (RBF) networks. Roughly speaking, the primary difference between the problem of center selection in these two applications is that computations are usually static or offline in machine learning, whereas they are recursive or online in adaptive estimation. One of the most common unsupervised learning methods for choosing the kernel centers in RBF networks is the k-mean clustering or Lloyd's algorithm \cite{Mostafa2012, Wu2012a}. Researchers in the machine learning community have developed sophisticated methods for center selection/adaptation to optimize RBF networks. Some of the early accounts of such methods can be found in \cite{Orr1995a, Warwick1995a, Nie1993}. Self-organizing maps are another alternative for clustering data and thereby determining the kernel centers. The technique in \cite{Hager2004} relies on adding kernels such that the sum of squared error is minimized. Lin and Chen describe a method that combines Kohonen self-organizing maps and RBF networks in \cite{Lin2005}. Kernel centers are chosen based on the condition number of the sensitivity matrix in \cite{Fan2006}. 

Variants of self-organizing RBF networks have also been implemented for dynamical system identification and control. Lian et al. develop a self-organizing RBF network that tunes the RBF network parameters based on an adaptation law. \cite{Lian2008a} They used this method for real-time approximation of dynamical systems. Han et al. describe a version of self-organizing RBF networks that use a growing and pruning algorithm in \cite{Han2010a}. They illustrate the effectiveness of such networks and their variants \cite{Han2018} for dynamical system identification and model predictive control. \cite{Han2011, Han2012, Qiao2012} 

Researchers have also studied the application of radial basis function networks to control problems. Some of these studies do not explicitly deal with the problem of center selection. However, the center adaptation or the kernel adaptation problems are often indirectly addressed to improve performance. In some cases, even parameter convergence is achieved. An account of common methods can be found in \cite{Sundararajan2013}. Sanner and Slotine implement Gaussian networks for direct adaptive control in \cite{Sanner1992}. The neuro-control technique discussed in \cite{Volyanskyy2008} and \cite{Kim1999} uses a fixed set of basis functions or kernel centers. On the other hand, in the controller using neural networks proposed in \cite{Patino2002}, the kernel centers are chosen such that linear independence of $\mathfrak{K}_{\bm{x}_i}$ is maintained. As per the algorithm given in \cite{Nardi2000}, the kernel parameters are chosen to approximate the nonlinear inversion error over a compact set. Reference \cite{Senanayake2018a} presents the advantages of adapting the kernel parameters and presents a theory for static as well as dynamic problems. 

An important feature of this paper is the study of how the center selection problem in RKHS embedding is related to parameter convergence in adaptive estimation. In adaptive estimation, we ordinarily use sufficient conditions, referred to as \emph{persistence of excitation} (PE) conditions, to ensure parameter convergence. \cite{Ioannou, Sastry2011, Narendra2012} The kernel center selection algorithms in the articles cited above do not take persistence of excitation into consideration. In most practical cases, the PE conditions are difficult to ensure \emph{a priori}. They often do not play a constructive role in coming up with practical algorithms. For this reason, several authors have studied adaptive estimation methods which ensure parameter convergence without PE. In \cite{Chowdhary2010}, Chowdhary and Johnson show that if the chosen regressors evaluated at measured data are linearly independent, then we get parameter convergence. Kamalapurkar et al. extended this work in \cite{Kamalapurkar2017} to relax the assumptions and developed a concurrent learning technique that implements a dynamic state-derivative estimator. Kingravi et al. in \cite{Kingravi2012a} propose a real-time regressors update algorithm that uses the regressors linear independence test. In \cite{Modares2013}, Modares et al. show that parameter convergence can be ensured by checking for linear independence of the filtered regressor. An alternative class of methods uses Gaussian processes for adaptive estimation and adaptive control. \cite{Chowdhary2012, Chowdhary2013a, Grande2013, Abdollahi2019} In these methods, the kernel centers are chosen at the points corresponding to the measured output data. An introduction to this theory with examples is given in \cite{Liu2018}.

The conventional PE condition is linked to the richness of the regressor functions that are used to represent the unknown function. In the RKHS embedding method, the modified PE conditions, studied in \cite{jia2020a, jia2020b}, are directly related to the kernel center positions in the state-space. Recent results have shown that the idea of persistence of excitation can be associated with positive limit sets contained in the state-space. We review this theory rigorously in Section \ref{sec_RKHSadapest}. This theory, along with the sufficient condition presented in \cite{Kurdila1995}, give us explicitly what sets in the state-space are persistently excited. Thus, for a particular class of RKHS adaptive estimators, we can choose kernel centers from these sets. The recent results in \cite{Guo2020Rates} establish that the accuracy of the RKHS embedding method can be shown to depend on the fill distance of samples in an uniform manifold. As the fill distance decreases to zero, the finite-dimensional approximation of function estimate converges to the infinite-dimensional function estimate. At the same time, it is also known that the condition number of the Grammian matrix that must be inverted to implement the RKHS embedding method is bounded by the minimal separation of samples that define the space of approximants. These two observations suggest that strategies to control the distribution of samples in practical simulations are needed.

In this paper, we first prove that the infinite-dimensional PE condition implies uniform convergence of the parameter error in the PE sets (Corollary \ref{cor_PEPWconv}). This proof strengthens the results in \cite{jia2020a, jia2020b} in that it provides an intuitive insight into the implications of the PE condition in the infinite-dimensional RKHS embedding method. We then discuss the theory behind approximation of the infinite-dimensional adaptive estimator and prove that choosing kernel centers in PE sets implies convergence of the function estimates at the kernel centers (Theorem \ref{thm_pwconv}). This results also strengthens the early results in \cite{Bobade2019} and provides insights that connect convergence in the RKHS norm to practical observable results in computation. Based on these results and the theory in \cite{Bobade2019, jia2020a, jia2020b, Guo2020Rates}, we develop criteria for choosing kernel centers (Subsection \ref{ssec_centCrit}). We present two kernel center selection algorithms that satisfy these criteria for certain classes of nonlinear systems. They apply to systems in which the neighborhoods of points in the positive limit sets are visited regularly by the state trajectory. In the limited literature on adaptive estimation by RKHS embedding, such algorithms are yet to be explored to the best of the authors' knowledge. The first algorithm is based on constructing centroidal Voronoi tessellations (CVT) of a polygon that surrounds the measured data. The second approach is based on Kohonen self-organizing maps. The advantages of these methods are as follows:
\begin{enumerate}
\item These algorithms choose kernel centers directly from the state-space. Such methods work for a large class of regressor functions, or types of kernels that define the RKHS.
\item We do not need explicit equations for the persistently exciting sets, which is the case in most practical applications. In the absence of such knowledge, it is hard to pick kernel centers that are evenly distributed in the persistently exciting set.
\item There are commercially available software for computing CVT and Kohonen self-organizing maps. This makes both methods simple to implement.
\end{enumerate}

We organize the sections in this paper as follows. In Section \ref{sec_RKHSadapest}, we present the theory of adaptive estimation in infinite-dimensional RKHS and basic properties of persistence of excitation. We also discuss the relation between the approximation rates and distribution of samples in the state-space. Finally, we present the criteria for center selection and illustrate the effectiveness of the criteria using an example. In Section \ref{sec_CVT}, we present the first method and theory of CVT based kernel center selection. We also prove theorems on convergence in this section. Section \ref{sec_Koho} presents the method based on Kohonen self-organizing maps. Finally, we present two examples that illustrate the effectiveness of both methods in Section \ref{sec_examples}.

%
%
\section{RKHS Embedding for Adaptive Estimation}
\label{sec_RKHSadapest}
\subsection{Reproducing Kernel Hilbert Space}
\label{ssec_RKHS}
A reproducing kernel Hilbert space $\mathcal{H}_X$ is a Hilbert space associated with a positive-definite kernel $\mathcal{K} : X \times X \to \mathbb{R}$. See \cite{Aronszajn1950, Berlinet2011} for axiomatic definitions of what constitutes an admissible kernel. The kernel satisfies two properties, (1) $\mathcal{K} (\bm{x}, \cdot) \in \mathcal{H}$ for all $\bm{x} \in X$, and (2) the reproducing property: for all $\bm{x} \in X$ and $f \in \mathcal{H}_X$, $(\mathcal{K}(\bm{x},\cdot), f)_{\mathcal{H}_X} = \mathcal{E}_{\bm{x}} f = f(\bm{x})$. Here, the notation $(\cdot, \cdot)_{\mathcal{H}_X}$ denotes the inner product associated with the Hilbert space $\mathcal{H}_X$. The term $\mathcal{E}_{\bm{x}}$ is the evaluation functional, which is a bounded linear operator. Throughout this paper, we consider RKHS generated by kernels which satisfy the condition that $\mathcal{K}(\bm{x},\bm{x}) \leq \bar{k}^2 < \infty$. This condition implies that the RKHS is continuously embedded in the space of continuous functions $C(X)$. \cite{Bobade2019} Many reproducing kernels used in practice satisfy the above condition. Given a positive-definite kernel, the RKHS $\mathcal{H}_X$ is generated by
\begin{equation*}
\mathcal{H}_X := \overline{ span \{ \mathcal{K}(\bm{x},\cdot) | \bm{x} \in X \}}.
\end{equation*}
Note that if the set $X$ is infinite-dimensional, then the RKHS it generates is also infinite-dimensional. Given a subset $\Omega \subseteq X$, we define the associated RKHS $\mathcal{H}_\Omega \subseteq \mathcal{H}_X$ by
\begin{equation*}
\mathcal{H}_\Omega := \overline{ span \{ \mathcal{K}(\bm{x},\cdot) | \bm{x} \in \Omega \}}.
\end{equation*}
The above-mentioned reproducing property endows the RKHS with a structure that makes calculations easier. A detailed list of properties of RKHS can be found in \cite{Aronszajn1950, Berlinet2011}. In this paper, we are particularly interested in the properties of projection operators that act on an RKHS. We let $P_{\Omega}$ be the $\mathcal{H}_X$ orthogonal projection operator $P_{\Omega} : \mathcal{H}_X \to \mathcal{H}_\Omega$. From Hilbert space theory, we know that the operator $P_\Omega$ decomposes the Hilbert space $\mathcal{H}_X$ into $\mathcal{H}_\Omega \bigoplus \mathcal{V}_\Omega$, where $\mathcal{V}_\Omega$ is the space of elements orthogonal to the elements of the space $\mathcal{H}_\Omega$. Since the space $\mathcal{H}_X$ is an RKHS, the reproducing property implies that for any $h \in \mathcal{V}_\Omega$, we have $h(\bm{x}) = 0$ for all $\bm{x} \in \Omega$. Another important property we use in this paper is that for any discrete finite set $\Omega_n$, the projection operator $P_{\Omega_n}$ coincides with the interpolation operator over $\Omega_n$, i.e., for all $h \in \mathcal{H}_X$, and $\bm{x} \in \Omega_n$, we have $h(\bm{x}) = (P_{\Omega_n} h) (\bm{x})$. \cite{Wendland2004}

\subsection{Adaptive Estimation in RKHS}
Consider a nonlinear system governed by the ordinary differential equation
\begin{align*}
\dot{\bm{x}}(t) = A \bm{x}(t) + B f(\bm{x}(t)),
\end{align*}
where $\bm{x}(t) \in \mathbb{R}^d$ is the state, $A \in \mathbb{R}^{d \times d}$ is a known Hurwitz matrix, $B \in \mathbb{R}^d$ is a known vector and $f : \mathbb{R}^d \to \mathbb{R}$ is the unknown (nonlinear) function. Note, if the original system equations do not contain the term $A \bm{x}(t)$, we can add and subtract a known Hurwitz matrix and redefine the unknown nonlinear function to have the form shown above. As noted in \cite{Bobade2019} and discussed in more detail there, more general systems can addressed in the analysis that follows via analogy to the model problem above.

We assume that the unknown function $f$ lives in the RKHS $\mathcal{H}_X$, where $X = \mathbb{R}^d$ is the state-space of the system. In other words, we assume that the unknown $f$ has the form $f(\cdot) = \sum_{i\in \mathbb{I}}^\infty \alpha_i \mathcal{K}_{\bm{x}_i}(\cdot)$ for some $\{\bm{x}_i \}_{i \in \mathbb{I}}$ with $\mathbb{I}$ either finite or infinite. We now define an estimator model of the form
\begin{align*}
    \dot{\hat{\bm{x}}}(t) = A \hat{\bm{x}}(t) + B \hat{f}(t,\bm{x}(t)),
\end{align*}
where $\hat{\bm{x}}(t) \in \mathbb{R}^d$ is the state estimate and $\hat{f}(t,\bm{x}(t))$ is the function estimate. For each $t$, the function estimate $\hat{f}(t)$ is an element of the space $\mathcal{H}_X$. In this paper, we assume full-state measurement. This assumption allows us to define a function estimate $\hat{f}(t)$ that depends on the actual states $\bm{x}(t)$. Note that the function estimate also explicitly depends on the time $t$. The goal of adaptive estimation is make $\hat{f}(t) \to f$ as $t \to \infty$. To achieve this, we define the rate of evolution of the function estimate by the learning law
\begin{align*}
    \dot{\hat{f}}(t) = \Gamma^{-1} (B \mathcal{E}_{\bm{x}(t)})^* P (\bm{x}(t) - \hat{\bm{x}}(t)),
\end{align*}
where $\Gamma \in \mathbb{R}$, $\Gamma > 0$. The notation $(\cdot)^*$ represents the adjoint of an operator. Additionally, the term $P$ is a symmetric positive-definite matrix in $\mathbb{R}^{d \times d}$ that solves the Lyapunov equation $A^T P + PA = - Q$, where $Q \in \mathbb{R}^{d \times d}$ is an arbitrarily chosen symmetric positive-definite matrix.

If we define the state and function errors as $\tilde{\bm{x}}(t) := \bm{x}(t) - \hat{\bm{x}}(t)$ and $\tilde{f}(t) := f - \hat{f}(t)$, the error evolution equations can be expressed as
\begin{align}
    \begin{Bmatrix}
    \dot{\tilde{\bm{x}}}(t) \\ \dot{\tilde{f}}(t)
    \end{Bmatrix}
    = 
    \underbrace{
    \begin{bmatrix}
    A & B \mathcal{E}_{\bm{x}(t)} \\
    - \Gamma^{-1} (B \mathcal{E}_{\bm{x}(t)})^* P & 0
    \end{bmatrix}
    }_{\mathbb{A}(t)}
    \begin{Bmatrix}
    \tilde{\bm{x}}(t) \\ \tilde{f}(t)
    \end{Bmatrix}.
    \label{eq_errEst}
\end{align}
Note, in the above error equation, the term $\mathbb{A}(t)$ is a uniformly bounded linear operator, and the states $\begin{Bmatrix} \tilde{\bm{x}}(t) & \tilde{f}(t) \end{Bmatrix}^T$ evolve in the infinite-dimensional space $\mathbb{R}^d \times \mathcal{H}_X$. 

Standard stability analysis using the Lyapunov's theorem and Barbalat's lemma shows that the norm of the state error $\|\tilde{\bm{x}}(t)\|_{\mathbb{R}^d} \to 0$ as $t \to \infty$. \cite{Bobade2019, jia2020a, jia2020b}

%
\subsection{Parameter Convergence, PE and Positive Limit Sets}
\label{ssec_paraConv}
As mentioned earlier, persistence of excitation (PE) conditions are used to prove convergence of the function estimate to the actual function. Two different definitions of PE in RKHS are available in the recent literature on RKHS embedding methods. \cite{jia2020a, jia2020b} They are as follows.

\begin{definition}
\label{def_PE1}
(\textbf{PE. $\bm{1}$}) The trajectory $\bm{x}: t \mapsto \bm{x}(t) \in \mathbb{R}^d$ persistently excites the indexing set $\Omega$ and the RKHS $\mathcal{H}_\Omega$ provided there exist positive constants $T_0, \gamma, \delta,$ and $\Delta$, such that for each $t \geq T_0$ and any $g \in \mathcal{H}_X$, there exists $s \in [t,t+\Delta]$ such that
\begin{align*}
    \left| \int_{s}^{s+\delta} \mathcal{E}_{\bm{x}(\tau)} g d \tau \right| \geq \gamma \| P_\Omega g \|_{\mathcal{H}_X} > 0.
\end{align*}
\end{definition}

\begin{definition}
\label{def_PE2}
(\textbf{PE. $\bm{2}$}) The trajectory $\bm{x}: t \mapsto \bm{x}(t) \in \mathbb{R}^d$ persistently excites the indexing set $\Omega$ and the RKHS $\mathcal{H}_\Omega$ provided there exist positive constants $T_0$, $\gamma$, and $\Delta$ such that 
\begin{align*}
    \int_t^{t+\Delta} \left( \mathcal{E}^*_{\bm{x}(\tau)} \mathcal{E}_{\bm{x}(\tau)}g,g \right)_{\mathcal{H}_X} d \tau \geq \gamma \| P_\Omega g \|_{\mathcal{H}_X}^2 > 0
\end{align*}
for all $t \geq T_0$ and any $g \in \mathcal{H}_X$.
\end{definition}

Note that the PE condition given in Definition \ref{def_PE2} structurally resembles the classical PE conditions defined using regressors in finite-dimensional spaces. \cite{Ioannou, Sastry2011, Narendra2012} Recall that the term $P_\Omega$ in the above definitions is the orthogonal projection operator that maps elements from $\mathcal{H}_X$ to $\mathcal{H}_\Omega$. The following theorem from \cite{jia2020a, jia2020b} shows how these two PE conditions are related, and the PE condition in Definition \ref{def_PE1} implies parameter convergence. Note that the notion of parameter convergence in the infinite-dimensional case is given with respect to PE condition in Definition $\ref{def_PE1}$ only.

\begin{theorem}
\label{thm_PEInfConv}
The PE condition in Definition PE. \ref{def_PE1} implies the one in Definition PE. \ref{def_PE2}. Further, if $X = \Omega$ is a discrete finite set, the state trajectory $t \mapsto \bm{x}(t)$ is uniformly continuous and maps to a compact set, and the family of functions defined by $\{ g(\bm{x}(\cdot)): t \mapsto g(\bm{x}(t)) | g \in \mathcal{H}_X, \| g \|=1 \}$ is uniformly equicontinuous, then the PE condition in Definition PE. \ref{def_PE2} implies the one in Definition PE. \ref{def_PE1}.

Furthermore, if the trajectory $\bm{x} : t \mapsto \bm{x}(t)$ persistently excites the RKHS $\mathcal{H}_\Omega$ in the sense of Definition PE. \ref{def_PE1}. Then
\begin{align*}
    \lim_{t \to \infty} \| \tilde{\bm{x}}(t) \| = 0, \hspace{0.75in} \lim_{t \to \infty} \| P_\Omega \tilde{f}(t) \|_{\mathcal{H}_X} = 0.
\end{align*}
\end{theorem}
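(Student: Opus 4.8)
The plan is to establish the three assertions in turn, using throughout the identity $(\mathcal{E}^*_{\bm x(\tau)}\mathcal{E}_{\bm x(\tau)}g,g)_{\mathcal{H}_X} = |\mathcal{E}_{\bm x(\tau)}g|^2 = |g(\bm x(\tau))|^2$, so that the condition PE.\ref{def_PE2} is just a lower bound on $\int_t^{t+\Delta}|g(\bm x(\tau))|^2\,d\tau$. For the implication PE.\ref{def_PE1} $\Rightarrow$ PE.\ref{def_PE2}, I would fix $t \ge T_0$, take the $s \in [t,t+\Delta]$ guaranteed by PE.\ref{def_PE1}, and apply Cauchy--Schwarz to the scalar integral:
\[
\gamma^2\|P_\Omega g\|_{\mathcal{H}_X}^2 \le \left|\int_s^{s+\delta} g(\bm x(\tau))\,d\tau\right|^2 \le \delta\int_s^{s+\delta}|g(\bm x(\tau))|^2\,d\tau.
\]
Since $[s,s+\delta]\subseteq[t,t+\Delta+\delta]$, enlarging the window yields PE.\ref{def_PE2} with window $\Delta+\delta$ and constant $\gamma^2/\delta$. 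This step is routine.

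For the partial converse, the working tool is the uniform equicontinuity of $\{t\mapsto g(\bm x(t)) : \|g\|=1\}$. Under the hypothesis $X=\Omega$ we have $\mathcal{H}_\Omega=\mathcal{H}_X$, so $P_\Omega$ is the identity and $\|P_\Omega g\|=\|g\|$; this closes the gap between $\|P_\Omega g\|$ and $\|g\|$ that would otherwise block a uniform choice of $\delta$. From PE.\ref{def_PE2}, averaging over $[t,t+\Delta]$ produces a point $\sigma$ with $|g(\bm x(\sigma))|^2\ge(\gamma/\Delta)\|g\|^2$. Equicontinuity then supplies a $\delta$ independent of the normalized $g$ on which $g(\bm x(\cdot))$ keeps at least half this magnitude and a fixed sign throughout an interval of length $\delta$ about $\sigma$; integrating there gives $|\int g(\bm x(\tau))\,d\tau|\ge\tfrac{\delta}{2}\sqrt{\gamma/\Delta}\,\|g\|$, which is PE.\ref{def_PE1}. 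The delicate point is exactly this uniformity of $\delta$ over all normalized $g$, which the equicontinuity hypothesis is designed to deliver.

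For the convergence conclusion I would start from the Lyapunov functional $V=\tfrac12\tilde{\bm x}^T P\tilde{\bm x}+\tfrac12\Gamma\|\tilde f\|_{\mathcal{H}_X}^2$. A direct computation along \eqref{eq_errEst}, in which the two cross terms cancel and $A^TP+PA=-Q$ is used, gives $\dot V=-\tfrac12\tilde{\bm x}^T Q\tilde{\bm x}\le0$. Hence $\tilde f$ is bounded, $\tilde{\bm x}\in L^2$, and together with the already-established $\|\tilde{\bm x}\|\to0$ and $\|\mathcal{E}_{\bm x}\|\le\bar{k}$ we obtain $\dot{\tilde f}(t)=-\Gamma^{-1}(B\mathcal{E}_{\bm x(t)})^*P\tilde{\bm x}(t)\to0$ in $\mathcal{H}_X$, so $\tilde f$ is slowly varying. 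Now for each $t$ invoke PE.\ref{def_PE1} with $g=\tilde f(s)$ to obtain $s=s(t)\in[t,t+\Delta]$, and integrate the first block row of \eqref{eq_errEst}:
\[
\tilde{\bm x}(s+\delta)-\tilde{\bm x}(s)=\int_s^{s+\delta}A\tilde{\bm x}(\tau)\,d\tau+B\int_s^{s+\delta}\tilde f(\tau)(\bm x(\tau))\,d\tau.
\]
The left side and the $A\tilde{\bm x}$ integral vanish in the limit because $\tilde{\bm x}\to0$; replacing $\tilde f(\tau)$ by $\tilde f(s)$ costs $\int_s^{s+\delta}\mathcal{E}_{\bm x(\tau)}(\tilde f(\tau)-\tilde f(s))\,d\tau$, which also vanishes by slow variation and $\|\mathcal{E}_{\bm x}\|\le\bar{k}$. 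Thus $\|B\|\,\big|\int_s^{s+\delta}\tilde f(s)(\bm x(\tau))\,d\tau\big|\to0$, and PE.\ref{def_PE1} forces $\gamma\|P_\Omega\tilde f(s(t))\|_{\mathcal{H}_X}\to0$; slow variation of $\tilde f$ then transfers the limit to the original argument, giving $\|P_\Omega\tilde f(t)\|_{\mathcal{H}_X}\to0$. I expect the main obstacle to be the bookkeeping of these vanishing remainders --- verifying that the frozen-coefficient replacement $\tilde f(\tau)\approx\tilde f(s)$ and the endpoint differences are uniformly negligible --- so that the clean PE lower bound survives passage to the limit.
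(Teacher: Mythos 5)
The paper itself contains no proof of Theorem \ref{thm_PEInfConv}: it is imported verbatim from the cited references (\cite{jia2020a, jia2020b}), so there is no in-paper argument to compare yours against; I can only judge your proposal on its merits. Most of it is sound. The implication PE.\ref{def_PE1} $\Rightarrow$ PE.\ref{def_PE2} by Cauchy--Schwarz with the enlarged window $[t,t+\Delta+\delta]$ and constant $\gamma^2/\delta$ is correct. The partial converse is also correct as you argue it: $X=\Omega$ makes $P_\Omega$ the identity, the mean-value point $\sigma\in[t,t+\Delta]$ with $|g(\bm{x}(\sigma))|^2\ge(\gamma/\Delta)\|g\|^2$ exists by continuity, and uniform equicontinuity of the normalized family supplies a $\delta$ independent of both $g$ and $t$, so the sign-preservation and integration step closes, with homogeneity handling unnormalized $g$. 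The Lyapunov computation (cross terms cancel via the adjoint identity $(\tilde f,(B\mathcal{E}_{\bm{x}})^*P\tilde{\bm{x}})_{\mathcal{H}_X}=(B\mathcal{E}_{\bm{x}}\tilde f)^TP\tilde{\bm{x}}$, leaving $\dot V=-\tfrac12\tilde{\bm{x}}^TQ\tilde{\bm{x}}$) and its consequences --- boundedness of $\tilde f$, $\tilde{\bm{x}}\to 0$, $\dot{\tilde f}(t)\to 0$ in $\mathcal{H}_X$ via $\|\mathcal{E}_{\bm{x}}\|\le\bar{k}$ --- are all fine.

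The one step that fails as literally written is "invoke PE.\ref{def_PE1} with $g=\tilde f(s)$ to obtain $s=s(t)$." This is circular: PE.\ref{def_PE1} produces $s$ only \emph{after} $g$ is fixed, and you have defined $g$ in terms of the $s$ it is supposed to produce. The repair is immediate and uses exactly the machinery you already assembled: apply PE.\ref{def_PE1} to the frozen function $g=\tilde f(t)$ (the left endpoint of the window), obtaining $s(t)\in[t,t+\Delta]$ with
\begin{equation*}
\gamma\,\|P_\Omega\tilde f(t)\|_{\mathcal{H}_X}\;\le\;\Bigl|\int_{s}^{s+\delta}\mathcal{E}_{\bm{x}(\tau)}\tilde f(t)\,d\tau\Bigr|
\;\le\;\Bigl|\int_{s}^{s+\delta}\mathcal{E}_{\bm{x}(\tau)}\tilde f(\tau)\,d\tau\Bigr|
+\bar{k}\,\delta\sup_{\tau\in[t,t+\Delta+\delta]}\|\tilde f(\tau)-\tilde f(t)\|_{\mathcal{H}_X}.
\end{equation*}
The second term vanishes as $t\to\infty$ by your slow-variation estimate, and the first vanishes by your integrated-state-equation argument (which implicitly needs $B\neq 0$ to pass from $\|B\int_{s}^{s+\delta}\mathcal{E}_{\bm{x}(\tau)}\tilde f(\tau)\,d\tau\|$ to the scalar integral --- worth stating). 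This gives $\|P_\Omega\tilde f(t)\|_{\mathcal{H}_X}\to 0$ directly, eliminating the final "transfer the limit from $s(t)$ back to $t$" step entirely. With that substitution your proof is complete.
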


We can view the term $P_\Omega \tilde{f}(t)$ as an element of the space $\mathcal{H}_\Omega$. Thus, the above statement implies that $P_\Omega \tilde{f}(t)$ converges to the zero element in the $\mathcal{H}_\Omega$ space. However, this statement does not imply the convergence or even the existence of the limit of $\tilde{f}(t) \in \mathcal{H}_X$. 

The statement $\lim_{t \to \infty} \| P_\Omega \tilde{f}(t) \|_{\mathcal{H}_X} = 0$ is hard to interpret intuitively. The following corollary of the above theorem gives us the intuition about where the convergence is achieved. 

\begin{corollary}
\label{cor_PEPWconv}
If the trajectory $\bm{x} : t \mapsto \bm{x}(t)$ persistently excites the set $\Omega$ and the RKHS $\mathcal{H}_\Omega$ in the sense of Definition PE. \ref{def_PE1}, then $ \hat{f}(t) $ converges uniformly to $f$ on the set $\Omega$ as $t \to \infty$.
\end{corollary}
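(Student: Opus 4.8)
The plan is to convert the RKHS-norm convergence $\lim_{t\to\infty}\|P_\Omega \tilde{f}(t)\|_{\mathcal{H}_X}=0$ supplied by Theorem~\ref{thm_PEInfConv} into uniform pointwise convergence on $\Omega$, exploiting the reproducing property together with the uniform kernel bound $\mathcal{K}(\bm{x},\bm{x}) \le \bar{k}^2$. Since uniform convergence of $\hat{f}(t)$ to $f$ on $\Omega$ is by definition the statement $\sup_{\bm{x}\in\Omega}|\tilde{f}(t)(\bm{x})|\to 0$, the entire task reduces to bounding the pointwise error $\tilde{f}(t)(\bm{x})$ by the already-controlled quantity $\|P_\Omega \tilde{f}(t)\|_{\mathcal{H}_X}$, uniformly in $\bm{x}$.

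First I would fix $\bm{x}\in\Omega$ and rewrite the pointwise error through the reproducing property as $\tilde{f}(t)(\bm{x}) = (\mathcal{K}(\bm{x},\cdot),\tilde{f}(t))_{\mathcal{H}_X}$. Next I would use the orthogonal decomposition $\tilde{f}(t) = P_\Omega \tilde{f}(t) + (I-P_\Omega)\tilde{f}(t)$, where $(I-P_\Omega)\tilde{f}(t)\in\mathcal{V}_\Omega$. The decisive observation is that for $\bm{x}\in\Omega$ the regressor $\mathcal{K}(\bm{x},\cdot)$ lies in $\mathcal{H}_\Omega$ by the very definition of $\mathcal{H}_\Omega$ as the closed span of such kernels, so it is orthogonal to every element of $\mathcal{V}_\Omega=\mathcal{H}_\Omega^\perp$. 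Hence the orthogonal-complement term drops out and $\tilde{f}(t)(\bm{x}) = (\mathcal{K}(\bm{x},\cdot),P_\Omega \tilde{f}(t))_{\mathcal{H}_X}$. Applying Cauchy--Schwarz and the identity $\|\mathcal{K}(\bm{x},\cdot)\|_{\mathcal{H}_X}^2 = \mathcal{K}(\bm{x},\bm{x})\le \bar{k}^2$ then yields $|\tilde{f}(t)(\bm{x})| \le \bar{k}\,\|P_\Omega \tilde{f}(t)\|_{\mathcal{H}_X}$.

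Because the right-hand side is independent of $\bm{x}$, I would take the supremum over $\bm{x}\in\Omega$ to obtain $\sup_{\bm{x}\in\Omega}|\tilde{f}(t)(\bm{x})| \le \bar{k}\,\|P_\Omega \tilde{f}(t)\|_{\mathcal{H}_X}$, and then let $t\to\infty$, invoking Theorem~\ref{thm_PEInfConv} to drive the bound to zero. The argument is short, so there is no single hard computational step; the crux is the conceptual one of recognizing that the \emph{point-independent} constant $\bar{k}$ is precisely what upgrades norm convergence in $\mathcal{H}_X$ to \emph{uniform} convergence on $\Omega$. The subtlety I would watch is that $\Omega$ may be infinite, which is exactly why a uniform bound rather than a merely pointwise one is essential, and that one must use the embedding $\mathcal{H}_\Omega\subseteq\mathcal{H}_X$ so that $\mathcal{K}(\bm{x},\cdot)$ and $P_\Omega\tilde{f}(t)$ can be paired in the common inner product $(\cdot,\cdot)_{\mathcal{H}_X}$.
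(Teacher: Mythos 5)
Your proof is correct and takes essentially the same route as the paper's: both decompose $\tilde{f}(t) = P_\Omega \tilde{f}(t) + v(t)$ with $v(t) \in \mathcal{V}_\Omega$, discard the $\mathcal{V}_\Omega$ component at points of $\Omega$ (the paper invokes the stated property that elements of $\mathcal{V}_\Omega$ vanish on $\Omega$, while you rederive that same fact from the orthogonality of $\mathcal{K}(\bm{x},\cdot) \in \mathcal{H}_\Omega$ to $\mathcal{V}_\Omega$), and then bound $|\tilde{f}(t)(\bm{x})| \le \bar{k}\,\|P_\Omega \tilde{f}(t)\|_{\mathcal{H}_X}$ uniformly in $\bm{x}$ before invoking Theorem~\ref{thm_PEInfConv}. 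Your version is, if anything, slightly more explicit than the paper's in spelling out the Cauchy--Schwarz step and in taking the supremum over $\bm{x} \in \Omega$ to make the uniformity of the convergence manifest.
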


\begin{proof}
Suppose the projection operator $P_\Omega$ decomposes the function $\tilde{f}(t)$ into $\tilde{f}(t) = P_{\Omega} \tilde{f}(t) + v(t)$, where $P_{\Omega} (\tilde{f}(t)) \in \mathcal{H}_{\Omega}$ and $v(t) \in \mathcal{V}_\Omega$. Since $v(t,\bm{x}) = 0$ for all $\bm{x} \in \Omega$, we have $\tilde{f}(t) = P_{\Omega} \tilde{f}(t,\bm{x})$. Thus, for all $\bm{x} \in \Omega$, we have
\begin{align*}
    |\tilde{f}(t)| = |P_{\Omega} \tilde{f}(t,\bm{x})| = |\mathcal{E}_{\bm{x}} P_{\Omega} \tilde{f}(t)| \leq \| \mathcal{E}_{\bm{x}} \| \| P_{\Omega} \tilde{f}(t) \|_{\mathcal{H}_X}.
\end{align*}
But we have assumed in this paper that the kernel $\mathcal{K}$ that induces $\mathcal{H}_X$ satisfies $\mathcal{K}(\bm{x},\bm{x}) \leq \bar{k}^2 < \infty$ for all $\bm{x} \in X$. Since the evaluation functional is consequently uniformly bounded, the above inequality holds for all $\bm{x} \in \Omega$. Taking the limit $t \to \infty$ and using Theorem \ref{thm_PEInfConv} gives us the desired result.
\end{proof}

The above corollary clearly shows that, if the PE condition holds and \emph{the kernel satisfies $\mathcal{K}(\bm{x},\bm{x}) \leq \bar{k}^2 < \infty$}, then $\hat{f}(t,\bm{x}) \to f(\bm{x})$ for all $\bm{x} \in \Omega$. Generally, we would prefer the whole space to be persistently exciting, i.e. $\Omega = X$. However, this is not the case in most practical applications. Furthermore, the above PE definitions are hard to understand intuitively and difficult, if not impossible, to verify in real applications. The following theorem from \cite{Kurdila2019PE} shows us exactly where to look for persistently exciting sets in the state-space. The theorem assumes that the RKHS space separates closed sets.

\begin{definition}
\label{def_sepset}
We say the RKHS $\mathcal{H}_X$ separates a set $A \subseteq X$ if for each $\bm{b} \notin A$, there is a function $f \in \mathcal{H}_X$ such that $f(\bm{a}) = 0$ for all $\bm{a} \in A$ and $f(\bm{b}) \neq 0$.
\end{definition}

\begin{condition}
\label{cond_sepSets}
The RKHS $\mathcal{H}_X$ separates closed sets.
\end{condition}

The RKHS generated by the Gaussian kernel, which is extensively used for RKHS based adaptive estimation and machine learning, does not satisfy the above condition for all closed sets. A detailed account for RKHS that separate closed sets can be found in \cite{DeVito2012}. In this paper, we use the Sobolev-Matern kernels, which satisfy the above condition.

\begin{theorem}
\label{thm_PElimitset}
Let $\mathcal{H}_X$ be the RKHS of functions over $X$ and suppose that this RKHS includes a rich family of bump functions. If the PE condition in Definition PE. \ref{def_PE2} holds for $\Omega$, then $\Omega \subseteq \omega^+(\bm{x}_0)$, the positive limit set corresponding to the initial condition $\bm{x}_0$.
\end{theorem}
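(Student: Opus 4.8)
The plan is to argue by contradiction, exploiting the \emph{strict} positivity on the right-hand side of the PE condition in Definition \ref{def_PE2} together with the localization afforded by bump functions. Suppose some point $\bm{b} \in \Omega$ fails to lie in $\omega^+(\bm{x}_0)$. By the standard characterization of the positive limit set, $\bm{b} \notin \omega^+(\bm{x}_0)$ means precisely that the forward trajectory does not return arbitrarily late to $\bm{b}$: there exist an open neighborhood $U$ of $\bm{b}$ and a time $T_1$ such that $\bm{x}(\tau) \notin U$ for every $\tau \geq T_1$.

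First I would invoke the hypothesis that $\mathcal{H}_X$ contains a rich family of bump functions to produce a $g \in \mathcal{H}_X$ whose support is contained in $U$ and for which $g(\bm{b}) \neq 0$. The key structural fact, already recorded in Section \ref{ssec_RKHS}, is that every element $v \in \mathcal{V}_\Omega$ vanishes on $\Omega$. Decomposing $g = P_\Omega g + v$ with $v \in \mathcal{V}_\Omega$ and evaluating at $\bm{b} \in \Omega$ therefore gives $(P_\Omega g)(\bm{b}) = g(\bm{b}) \neq 0$, so $P_\Omega g$ is not the zero element and $\| P_\Omega g \|_{\mathcal{H}_X} > 0$.

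Next I would evaluate the left-hand side of the PE inequality along the trajectory for this particular $g$. Because $\operatorname{supp} g \subseteq U$ while $\bm{x}(\tau) \notin U$ for all $\tau \geq T_1$, we have $g(\bm{x}(\tau)) = \mathcal{E}_{\bm{x}(\tau)} g = 0$ for all $\tau \geq T_1$. Hence, for any $t \geq \max\{T_0, T_1\}$, the adjoint identity collapses the quadratic form and
\begin{align*}
\int_t^{t+\Delta} \big( \mathcal{E}^*_{\bm{x}(\tau)} \mathcal{E}_{\bm{x}(\tau)} g, g \big)_{\mathcal{H}_X}\, d\tau = \int_t^{t+\Delta} |g(\bm{x}(\tau))|^2\, d\tau = 0.
\end{align*}
But Definition \ref{def_PE2} demands that this same integral be at least $\gamma \| P_\Omega g \|_{\mathcal{H}_X}^2 > 0$, a contradiction. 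Thus no such $\bm{b}$ exists and $\Omega \subseteq \omega^+(\bm{x}_0)$.

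The only genuinely delicate step is the construction of the localized $g$: I must be certain the RKHS actually contains a function that is simultaneously supported inside the prescribed neighborhood $U$ and nonzero at $\bm{b}$. This is exactly the content of the rich-family-of-bump-functions hypothesis, and it is why the Gaussian RKHS, whose elements are analytic and cannot be compactly supported, is excluded in favor of the Sobolev--Matern kernels associated with Condition \ref{cond_sepSets}. Everything else is bookkeeping: the reduction of the quadratic form to $|g(\bm{x}(\tau))|^2$, and the observation that the value of the projection at $\bm{b}$ is inherited from $g$ because the orthogonal complement $\mathcal{V}_\Omega$ is forced to vanish on $\Omega$.
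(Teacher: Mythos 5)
Your argument is correct, but note that for this particular statement there is no in-paper proof to compare against: the paper quotes Theorem \ref{thm_PElimitset} from an external reference (\texttt{Kurdila2019PE}) and gives only the surrounding discussion. Judged on its own merits, your proof is sound and is the natural argument the stated hypothesis invites. Each step checks out: (i) $\bm{b} \notin \omega^+(\bm{x}_0)$ is indeed equivalent to the existence of a neighborhood $U \ni \bm{b}$ and a time $T_1$ with $\bm{x}(\tau) \notin U$ for all $\tau \geq T_1$; (ii) the bump-function hypothesis supplies $g$ supported in $U$ with $g(\bm{b}) \neq 0$; (iii) since every element of $\mathcal{V}_\Omega$ vanishes on $\Omega$ (the property recorded in Section \ref{ssec_RKHS}), evaluation at $\bm{b} \in \Omega$ gives $(P_\Omega g)(\bm{b}) = g(\bm{b}) \neq 0$, hence $\| P_\Omega g \|_{\mathcal{H}_X} > 0$; and (iv) the identity $\left( \mathcal{E}^*_{\bm{x}(\tau)} \mathcal{E}_{\bm{x}(\tau)} g, g \right)_{\mathcal{H}_X} = |g(\bm{x}(\tau))|^2$ makes the PE integral vanish on the trajectory tail, contradicting Definition \ref{def_PE2}. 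One small observation worth adding: the paper frames the theorem against Condition \ref{cond_sepSets} (separation of closed sets), which is formally a different localization hypothesis than compactly supported bump functions; your argument adapts to that weaker condition by applying the separation property to the closed set $\overline{\{\bm{x}(\tau) : \tau \geq T_1\}}$, which excludes $\bm{b}$ because the tail of the trajectory avoids the open set $U$. Either route yields the same contradiction, so your proof covers the theorem exactly as stated.
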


When the RKHS satisfies Condition \ref{cond_sepSets}, this theorem gives us a necessary condition for a set to be persistently excited. While designing a adaptive estimator, this necessary condition can tell us where to look for persistently excited sets in the state-space.

\subsection{Approximations, Convergence Rates and Sufficient Condition}
\label{ssec_RKHSapprox}
For practical implementation, we approximate the infinite-dimensional adaptive estimator equations given in the previous subsection. Let $\{\Omega_n\}_{n \in \mathbb{N}}$ be a finite nested sequence of subsets of $\Omega$, Further, let $\{ \mathcal{H}_{\Omega_n} \}_{n \in \mathbb{N}}$ be the corresponding subspaces of $\mathcal{H}_X$ generated by the finite sets $\Omega_n$. Now, define $P_{\Omega_n}$ as the orthogonal projection operator from $\mathcal{H}_X$ to the subspace $\mathcal{H}_{\Omega_n}$ such that $\lim_{n \to \infty} P_{\Omega_n} f = f$ for all $f \in \mathcal{H}_X$. With this definition of approximation, we write the finite-dimensional adaptive estimator model and the learning law as

\begin{align*}
    \dot{\hat{\bm{x}}}_n(t) &= A \hat{\bm{x}}_n(t) + B \mathcal{E}_{\bm{x}(t)} \Pi_n^* \hat{f}_n(t), 
    \\
    \dot{\hat{f}}_n(t) &= \Gamma^{-1} \left( B \mathcal{E}_{\bm{x}(t)} \Pi_n^* \right)^* P \tilde{\bm{x}}_n(t)
\end{align*}
with $ \tilde{\bm{x}}_n:= \bm{x} - \hat{\bm{x}}_n$. Since the RKHS $\mathcal{H}_{\Omega_n}$ is finite-dimensional, the basis of $\mathcal{H}_{\Omega_n}$ is the set $\{ \mathfrak{K}_{\bm{x}_i}| \bm{x}_i \in \Omega_n\}$. We now note that the finite-dimensional function estimate $\hat{f}_n(t)$ has the form $\hat{f}_n (t) := \sum_{i=1}^n \hat{\alpha}_i (t) \mathfrak{K}_{\bm{x}_i}$. Using the reproducing property of the kernel, we rewrite the above finite-dimensional learning law as

\begin{align}
    \dot{\hat{\bm{\alpha}}}(t)= \mathbb{K}^{-1} \bm{\Gamma}^{-1} \bm{\mathcal{K}}(\bm{x}_{c},\bm{x}(t)) B^* P \tilde{\bm{x}}_n(t),
    \label{eq_FinLLaw}
\end{align}
where $\hat{\bm{\alpha}}(t):= \{ \hat{\alpha}_1(t),\ldots,\hat{\alpha}_n(t) \}^T$, $\mathbb{K}$ is the symmetric positive definite Grammian matrix whose $ij^{th}$ element is defined as $\mathbb{K}_{ij} := \mathcal{K}(\bm{x}_i,\bm{x}_j)$, $\bm{\Gamma}:= \Gamma \mathbb{I}_{n}$ is the gain matrix, and 
$$\bm{\mathcal{K}}(\bm{x}_{c},\bm{x}(t)):= \begin{Bmatrix} \mathcal{K}(\bm{x}_1,\bm{x}(t)),\ldots,\mathcal{K}(\bm{x}_n,\bm{x}(t)) \end{Bmatrix}^T.$$

The new learning law defines the rate of evolution of the coefficients, as opposed to the old learning law which defines the rate of evolution of the function $\hat{f}_n(t)$. This step is essential for implementation purposes. We refer the reader to \cite{Paruchuri2020Piezo} for the intermediate steps involved in the derivation.

Note, the PE condition implies the convergence of the infinite-dimensional function estimate $\hat{f}(t)$ to $f$. It does not imply anything about the convergence of the approximation of the function estimate $\hat{f}_n(t)$ to $f$. On the other hand, the following theorem, proved in \cite{Bobade2019}, shows that the term $\hat{f}_n(t)$ to $\hat{f}(t)$ as $n \to \infty$.

\begin{theorem}
Suppose that $\bm{x} \in C([0,T],\mathbb{R}^d)$ and that the embedding $i : \mathcal{H}_X \hookrightarrow C(\Omega)$ is uniform in the sense that 
\begin{align*}
    \|f\|_{C(\Omega)} \equiv \|if\|_{C(\Omega)} \leq C \|f\|_{\mathcal{H}_X}.
\end{align*}
Then for any $T>0$ and $t \in [0,T]$,
\begin{align*}
    \| \hat{\bm{x}} - \hat{\bm{x}}_n \|_{C([0,T];\mathbb{R}^d)} &\to 0, \\
    \|\hat{f}(t) - \hat{f}_n(t) \|_{C([0,T];\mathbb{R}^d)} &\to 0,
\end{align*}
as $n \to \infty$.
\end{theorem}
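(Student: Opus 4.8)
The plan is to compare the two systems by splitting the function error into a component that lives in the approximating subspace and a component that measures the quality of the subspace itself. Writing $P_{\Omega_n}$ for the orthogonal projection onto $\mathcal H_{\Omega_n}$, I would set $e_n(t) := \hat{\bm x}(t) - \hat{\bm x}_n(t)$ and decompose
\begin{align*}
\hat f(t) - \hat f_n(t) = \underbrace{(I - P_{\Omega_n})\hat f(t)}_{\text{projection error}} + \underbrace{\bigl(P_{\Omega_n}\hat f(t) - \hat f_n(t)\bigr)}_{=:~w_n(t)\,\in\,\mathcal H_{\Omega_n}}.
\end{align*}
Because $\{\Omega_n\}$ is chosen so that $P_{\Omega_n}f \to f$ for every $f\in\mathcal H_X$, the first term vanishes pointwise in $t$, and the whole theorem reduces to showing that $e_n$ and $w_n$ tend to zero uniformly on $[0,T]$ while the projection error does the same.

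Next I would derive the coupled dynamics of $(e_n,w_n)$ by subtracting the finite-dimensional estimator and learning law from the infinite-dimensional ones, using $\mathcal E_{\bm x}\Pi_n^* = \mathcal E_{\bm x}$ on $\mathcal H_{\Omega_n}$ and $\Pi_n = P_{\Omega_n}$. This yields
\begin{align*}
\dot e_n &= A e_n + B\,\mathcal E_{\bm x(t)} w_n + B\, r_n(t), \\
\dot w_n &= -\Gamma^{-1} P_{\Omega_n} (B\mathcal E_{\bm x(t)})^* P\, e_n,
\end{align*}
where $r_n(t) := \mathcal E_{\bm x(t)}(I-P_{\Omega_n})\hat f(t) = \bigl[(I-P_{\Omega_n})\hat f(t)\bigr](\bm x(t))$ is a forcing term produced entirely by the projection error. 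The crucial algebraic observation, mirroring the skew structure of $\mathbb A(t)$ in \eqref{eq_errEst}, is that along the Lyapunov functional $V_n := e_n^T P e_n + \Gamma\|w_n\|_{\mathcal H_X}^2$ the two coupling terms cancel: since $w_n\in\mathcal H_{\Omega_n}$ we have $P_{\Omega_n}w_n = w_n$, so $\bigl(w_n,\,P_{\Omega_n}(B\mathcal E_{\bm x})^*Pe_n\bigr)_{\mathcal H_X} = (B^T P e_n)\,w_n(\bm x)$ matches $e_n^T P B\,\mathcal E_{\bm x}w_n$ exactly. Using $A^T P + PA = -Q$ this leaves the clean estimate $\dot V_n = -e_n^T Q e_n + 2 e_n^T P B\, r_n \le 2\|PB\|\,\|e_n\|\,|r_n(t)|$.

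From here the argument is a comparison step. With the natural matched initial data $\hat{\bm x}_n(0)=\hat{\bm x}(0)$ and $\hat f_n(0)=P_{\Omega_n}\hat f(0)$ we have $V_n(0)=0$, and bounding $\|e_n\|\le \lambda_{\min}(P)^{-1/2}\sqrt{V_n}$ gives $\tfrac{d}{dt}\sqrt{V_n}\le c\,|r_n(t)|$, hence $\sup_{[0,T]}\sqrt{V_n}\le cT\sup_{[0,T]}|r_n|$. Controlling the forcing is where the hypotheses enter: $|r_n(t)|\le \bar k\,\|(I-P_{\Omega_n})\hat f(t)\|_{\mathcal H_X}$ by the uniform bound $\mathcal K(\bm x,\bm x)\le\bar k^2$. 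I expect the main obstacle to be upgrading this to \emph{uniform} convergence in $t$: pointwise convergence of $(I-P_{\Omega_n})\hat f(t)$ is immediate from the assumed strong convergence of the projections, but uniformity requires that the orbit $\{\hat f(t):t\in[0,T]\}$ be compact in $\mathcal H_X$. This follows because the well-posedness and boundedness of the infinite-dimensional estimator, established in \cite{Bobade2019}, make $t\mapsto\hat f(t)$ Lipschitz and hence continuous, so its image is a compact subset of $\mathcal H_X$ on which the strong convergence $P_{\Omega_n}\to I$ is automatically uniform (a Dini-type argument). Thus $\sup_{[0,T]}|r_n|\to 0$, giving $\sup_{[0,T]}V_n\to 0$; this yields the first claim directly and, combined with $\|(I-P_{\Omega_n})\hat f(t)\|_{\mathcal H_X}\to 0$, the second, after passing from the $\mathcal H_X$ norm to $C(\Omega)$ through the assumed uniform embedding $\|\cdot\|_{C(\Omega)}\le C\|\cdot\|_{\mathcal H_X}$.
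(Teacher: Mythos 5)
The first thing to note is that the paper does not prove this theorem at all: it is quoted verbatim with a citation to \cite{Bobade2019} ("the following theorem, proved in \cite{Bobade2019}..."), so there is no in-paper argument to compare against, and your proposal must be judged on its own merits and against that reference. On its merits, your argument is essentially correct, and it is a genuinely self-contained Lyapunov-energy route rather than the abstract evolution-equation/Gronwall treatment of the cited work. The three load-bearing steps all check out: (i) the error dynamics are derived correctly, since $\Pi_n^*$ is the inclusion $\mathcal{H}_{\Omega_n}\hookrightarrow\mathcal{H}_X$, so $(B\mathcal{E}_{\bm{x}}\Pi_n^*)^* = P_{\Omega_n}(B\mathcal{E}_{\bm{x}})^*$ and $\tilde{\bm{x}}-\tilde{\bm{x}}_n = -e_n$; (ii) the cross terms in $\dot V_n$ cancel exactly because $w_n\in\mathcal{H}_{\Omega_n}$, $P_{\Omega_n}$ is self-adjoint, and $(B\mathcal{E}_{\bm{x}})^*v = (B^Tv)\mathfrak{K}_{\bm{x}}$, leaving $\dot V_n = -e_n^TQe_n + 2e_n^TPB\,r_n$ --- this is the same skew structure that drives the stability proof of the error system \eqref{eq_errEst}, now showing that the approximation error is driven only by the projection residual $r_n$; (iii) the upgrade from pointwise to uniform convergence of $\|(I-P_{\Omega_n})\hat f(t)\|_{\mathcal{H}_X}$ on $[0,T]$ via compactness of the orbit is valid, and in fact with the nested $\Omega_n$ the quantities $\|(I-P_{\Omega_n})f\|_{\mathcal{H}_X}$ decrease monotonically in $n$, so Dini's theorem applies literally (or one uses the standard finite $\epsilon$-net argument with $\|I-P_{\Omega_n}\|\le 1$). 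What each approach buys: yours makes the mechanism transparent and quantitative ($\sup_{[0,T]}\sqrt{V_n}\lesssim T\sup_{[0,T]}\|(I-P_{\Omega_n})\hat f(t)\|_{\mathcal{H}_X}$, which meshes naturally with the fill-distance rates of Theorem \ref{thm_approxrates}), while the Gronwall route in \cite{Bobade2019} does not depend on the exact Lyapunov cancellation and so is more robust to perturbations of the estimator structure.

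Three small patches are needed to make your argument airtight. First, the theorem says nothing about initial data; your step $V_n(0)=0$ silently assumes $\hat{\bm{x}}_n(0)=\hat{\bm{x}}(0)$ and $\hat f_n(0)=P_{\Omega_n}\hat f(0)$. This is the standard convention for the Galerkin approximations and should be stated; any choice with $\|P_{\Omega_n}\hat f(0)-\hat f_n(0)\|_{\mathcal{H}_X}\to 0$ also works and only adds a vanishing constant to the bound. Second, $\frac{d}{dt}\sqrt{V_n}$ is undefined where $V_n=0$, which is precisely your starting point; differentiate $\sqrt{V_n+\epsilon}$ instead, obtain $\sqrt{V_n(t)+\epsilon}\le\sqrt{\epsilon}+c\int_0^t|r_n(s)|\,ds$, and let $\epsilon\to 0$ (or invoke a comparison lemma). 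Third, compactness of the orbit $\{\hat f(t): t\in[0,T]\}$ requires continuity of $t\mapsto\hat f(t)$ in $\mathcal{H}_X$, which in turn requires $\sup_{[0,T]}\|\tilde{\bm{x}}(t)\|<\infty$ so that $\|\dot{\hat f}(t)\|_{\mathcal{H}_X}\le \Gamma^{-1}\bar k\,\|B\|\,\|P\|\,\|\tilde{\bm{x}}(t)\|$ is bounded; this boundedness is exactly what the Lyapunov/well-posedness analysis of the infinite-dimensional system provides, so your attribution to \cite{Bobade2019} is the right citation, but you should state explicitly that it is the boundedness of $\tilde{\bm{x}}$ on $[0,T]$, not mere existence of solutions, that is being imported.
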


Thus, as we choose denser finite discrete sets in $\Omega$, the approximation of the function estimate $\hat{f}_n(t)$ gets closer to the function estimate $\hat{f}(t)$, which in turn converges to the actual function $f$ as $t \to \infty$ if the PE condition holds. The above theorem does not explicitly tell us how to choose the set $\Omega_n \subseteq \Omega$. However, when the set $\Omega$ is a compact smooth Riemannian manifold embedded in $\mathbb{R}^d$ with metric $d$, the rate at which $\hat{f}_n(t)$ converges to the $\hat{f}(t)$ depends on how the elements of the set $\Omega_n$ are distributed in the set $\Omega$. This distribution is defined in terms of the \emph{fill distance} 
\begin{equation*}
    h_{\Omega_n,\Omega} := \sup_{x\in\Omega}\min_{\xi_i\in\Omega_n} d(x,\xi_i).
\end{equation*}

\begin{theorem}
\label{thm_approxrates}
Let $\Omega \subseteq X:=\mathbb{R}^d$ be a $k$-dimensional smooth manifold, and let the native space $\mathcal{H}_X$ be continuously embedded in a Sobolev space $W^{\tau,2}(X)$
with $\tau>d/2$, so that $\|f\|_{W^{\tau,2}(\mathbb{R}^d)} \lesssim \|f\|_{\mathcal{H}_X}$. Define $s=\tau-(d-k)/2$ and let $0\leq \mu \leq \lceil s\rceil -1$. Then there is a constant $h_\Omega$ such that if $h_{\Omega_n,\Omega}\leq h_\Omega$, then for all $f\in \bm{R}_\Omega (\mathcal{H}_X)$ we have 
\begin{equation*}
  \|(I-P_{\Omega_n})\hat{f}(t)\|_{W^{\mu,2}(\Omega)} \lesssim h_{\Omega_n,\Omega}^{s-\mu}\|\hat{f}(t)\|_{\bm{R}_\Omega (\mathcal{H}_X)}.  
\end{equation*}
\end{theorem}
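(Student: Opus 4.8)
The plan is to recognize the left-hand side as a kernel interpolation residual and then chain together a trace inequality, a sampling (\emph{zeros}) inequality on the manifold, and the minimum-norm stability of the interpolant. Since $\Omega_n$ is a finite discrete set, the projection $P_{\Omega_n}$ coincides with kernel interpolation over $\Omega_n$ (as established earlier in the paper), so the residual $u := (I - P_{\Omega_n})\hat f(t)$ vanishes at every point of $\Omega_n$. This is the single structural fact that drives the whole argument: the error function has a set of zeros whose density is measured by the fill distance $h_{\Omega_n,\Omega}$.

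First I would pass from the ambient native-space norm to a Sobolev norm on $\Omega$. By hypothesis $\mathcal{H}_X \hookrightarrow W^{\tau,2}(\mathbb{R}^d)$ continuously, and the standard trace (restriction) theorem for a $k$-dimensional smooth submanifold says that restriction $W^{\tau,2}(\mathbb{R}^d) \to W^{s,2}(\Omega)$ is bounded with the codimension loss $s = \tau - (d-k)/2$; this is exactly the definition of $s$ in the statement, and it is why $\tau > d/2$ guarantees $s > k/2$ so that point evaluation on $\Omega$ is meaningful. The restricted kernel then generates the space $\bm{R}_\Omega(\mathcal{H}_X)$, whose norm is equivalent to $\|\cdot\|_{W^{s,2}(\Omega)}$, so I may work entirely with Sobolev norms on $\Omega$ and convert back at the end.

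Next I would invoke a sampling inequality on $\Omega$ of the form $\|u\|_{W^{\mu,2}(\Omega)} \lesssim h_{\Omega_n,\Omega}^{\,s-\mu}\,|u|_{W^{s,2}(\Omega)}$, valid for functions $u$ that vanish on $\Omega_n$ whenever $h_{\Omega_n,\Omega}$ is below a threshold $h_\Omega$ and $0 \le \mu \le \lceil s\rceil - 1$. Applying this to the residual $u$ supplies the factor $h_{\Omega_n,\Omega}^{\,s-\mu}$. It then remains to bound $\|u\|_{W^{s,2}(\Omega)}$ by $\|\hat f(t)\|_{\bm{R}_\Omega(\mathcal{H}_X)}$. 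Here I would use that $P_{\Omega_n}$ is an orthogonal projection in $\mathcal{H}_X$, so the interpolant is the minimum-norm interpolant and $\|(I-P_{\Omega_n})\hat f(t)\|_{\mathcal{H}_X} \le \|\hat f(t)\|_{\mathcal{H}_X}$; transporting this bound to the manifold via the norm equivalence above yields $\|u\|_{W^{s,2}(\Omega)} \lesssim \|\hat f(t)\|_{\bm{R}_\Omega(\mathcal{H}_X)}$. Combining the two estimates gives the claim.

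The main obstacle is establishing the sampling inequality on the curved domain $\Omega$ with the sharp exponent $s-\mu$. On $\mathbb{R}^k$ this is the Narcowich--Ward--Wendland zeros lemma, proved by local polynomial reproduction together with a Bramble--Hilbert argument; on a manifold one must cover $\Omega$ by finitely many coordinate charts, transfer the Euclidean inequality through the chart diffeomorphisms (controlling how the fill distance and the Sobolev seminorms distort under the charts), and patch the local estimates with a subordinate partition of unity. Getting the fill-distance threshold $h_\Omega$ uniform across charts, and verifying that the codimension loss in the trace theorem matches the exponent produced by the sampling inequality so that the powers of $h_{\Omega_n,\Omega}$ are consistent, are the delicate technical points; the remainder of the argument is bookkeeping with the norm equivalences.
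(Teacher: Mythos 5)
Your proposal cannot be checked against an in-paper argument, because the paper never proves this theorem: it is quoted from \cite{Guo2020Rates}, and the text explicitly defers all technical details to that reference. Measured against the approach taken there (and in the scattered-data literature it builds on), your outline is essentially the standard and correct route: identify $P_{\Omega_n}$ with kernel interpolation so the residual vanishes on $\Omega_n$, use the trace/restriction theorem to account for the codimension loss $s=\tau-(d-k)/2$, apply a sampling (zeros) inequality on the manifold to produce the factor $h_{\Omega_n,\Omega}^{s-\mu}$, and close with the minimum-norm property of the interpolant. You also correctly identify where the real work lies, namely proving the sampling inequality on a curved domain with a uniform threshold $h_\Omega$.

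One imprecision worth fixing: you assert that the restriction space $\bm{R}_\Omega(\mathcal{H}_X)$ has norm \emph{equivalent} to $W^{s,2}(\Omega)$, but the theorem's hypothesis is only a one-sided continuous embedding $\mathcal{H}_X \hookrightarrow W^{\tau,2}(\mathbb{R}^d)$, which yields only the one-sided bound $\|u\|_{W^{s,2}(\Omega)} \lesssim \|u\|_{\bm{R}_\Omega(\mathcal{H}_X)}$ (take any extension $F$ of $u$, apply the trace theorem and the embedding, then infimize over extensions). Fortunately that is the only direction your argument needs; likewise the minimum-norm step should be carried out directly in $\bm{R}_\Omega(\mathcal{H}_X)$ --- the interpolant at $\Omega_n$ is the $\bm{R}_\Omega(\mathcal{H}_X)$-orthogonal projection onto the span of the restricted kernels, so $\|(I-P_{\Omega_n})\hat f(t)\|_{\bm{R}_\Omega(\mathcal{H}_X)} \leq \|\hat f(t)\|_{\bm{R}_\Omega(\mathcal{H}_X)}$ --- rather than transported through a two-sided equivalence that the hypotheses do not supply.
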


In the above theorem, the notation $\bm{R}_\Omega (\mathcal{H}_X)$ represents the restriction of the space $\mathcal{H}_X$ to the set $\Omega$, and the notation $a \lesssim b$ implies that there exists a positive constant $c$ such that $a \leq c b$. This theorem requires a lot of technical details and we direct interested readers to \cite{Guo2020Rates} for the detailed explanation of the rigorous theory and proofs. In this paper, we are interested in the implications of the theorem. The theorem states that the fill distance $h_{\Omega_n,\Omega}$ defines the rate at which the norm of the error $\hat{f}(t) - \hat{f}_n(t)$ converges to zero.

\subsubsection{Sufficient Condition}

In all the discussion above, we assume that we have knowledge of the persistently excited set $\Omega$. In most practical cases, it is impossible to determine this set. However, there is a much more practical and intuitive way for selecting the set $\Omega_n$ when the RKHS is generated by a radial basis kernel.

\begin{condition}
\label{cond_rbf}
The RKHS is generated by a radial basis kernel.
\end{condition}

\begin{theorem}
\label{thm_suff}
Let $\epsilon < \frac{1}{2} \min_{i \neq j} \| \bm{x}_i - \bm{x}_j \|$, where $\bm{x}_i$ and $\bm{x}_j$ are the kernel centers $\{\bm{x}_1, \ldots, \bm{x}_n \} \subseteq \omega^+(\bm{x}_0)$. For every $t_0 \geq 0$ and $\delta > 0$, define
\begin{align*}
    I_i := I_{i,\epsilon,\delta} := \{ t \in [t_0, t_0 + \delta] : \| \bm{x}(t) - \bm{x}_i \| \leq \epsilon \}.
\end{align*}
If there exists a $\delta = \delta(\epsilon)$ such that the measure of $I_i$ is bounded below by a positive constant that is independent of $t_0$ and the kernel center $\bm{x}_i$, and if the measure of $[t_0,t_0+\delta]$ is less than or equal to $\delta$, then the space $\mathcal{H}_n$ is persistently exciting in the sense of PE \ref{def_PE2}.
\label{thm_sc}
\end{theorem}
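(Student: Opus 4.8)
The plan is to unwind Definition PE.~\ref{def_PE2} into a concrete pointwise statement and then bound the resulting time integral from below center-by-center. First I would note that $\left(\mathcal{E}^*_{\bm{x}(\tau)}\mathcal{E}_{\bm{x}(\tau)}g,g\right)_{\mathcal{H}_X}=|\mathcal{E}_{\bm{x}(\tau)}g|^2=|g(\bm{x}(\tau))|^2$, so that PE.~\ref{def_PE2} for $\Omega=\Omega_n:=\{\bm{x}_1,\dots,\bm{x}_n\}$ is exactly the requirement $\int_t^{t+\Delta}|g(\bm{x}(\tau))|^2\,d\tau\geq\gamma\|P_{\Omega_n}g\|_{\mathcal{H}_X}^2$. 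Because the finite-dimensional estimator error evolves in $\mathcal{H}_{\Omega_n}$, the natural and relevant task is to verify this for $g\in\mathcal{H}_{\Omega_n}$, where $\|P_{\Omega_n}g\|_{\mathcal{H}_X}=\|g\|_{\mathcal{H}_X}$. I would take $\Delta=\delta$ and an arbitrary window $t=t_0\geq T_0$, so that the sets $I_i=I_{i,\epsilon,\delta}$ are precisely the portions of $[t_0,t_0+\delta]$ on which the trajectory lies in the closed $\epsilon$-ball about $\bm{x}_i$.

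The second step is geometric. Since $\epsilon<\tfrac12\min_{i\neq j}\|\bm{x}_i-\bm{x}_j\|$, no point can lie within $\epsilon$ of two distinct centers, so the $I_i$ are pairwise disjoint; hence $\sum_i|I_i|\leq\delta$ and $\int_{t_0}^{t_0+\delta}|g(\bm{x}(\tau))|^2\,d\tau\geq\sum_{i=1}^n\int_{I_i}|g(\bm{x}(\tau))|^2\,d\tau$. Next, using that the kernel is radial (Condition~\ref{cond_rbf}), write $\mathcal{K}(\bm{x},\bm{y})=\phi(\|\bm{x}-\bm{y}\|)$, so that for $\tau\in I_i$ the reproducing property gives $|g(\bm{x}(\tau))-g(\bm{x}_i)|=|(g,\mathfrak{K}_{\bm{x}(\tau)}-\mathfrak{K}_{\bm{x}_i})_{\mathcal{H}_X}|\leq\|g\|_{\mathcal{H}_X}\,\eta(\epsilon)$, where $\eta(\epsilon):=\bigl(2\phi(0)-2\phi(\epsilon)\bigr)^{1/2}$ controls $\|\mathfrak{K}_{\bm{x}(\tau)}-\mathfrak{K}_{\bm{x}_i}\|_{\mathcal{H}_X}$ and $\eta(\epsilon)\to0$ as $\epsilon\to0$. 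This uniform modulus-of-continuity estimate is what lets me replace the value $g(\bm{x}(\tau))$ seen along the trajectory by the value $g(\bm{x}_i)$ at the nearby center.

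The third step assembles the lower bound. Using the elementary inequality $|a|^2\geq\tfrac12|b|^2-|a-b|^2$ with $a=g(\bm{x}(\tau))$ and $b=g(\bm{x}_i)$, integrating over $I_i$, and invoking the hypothesis $|I_i|\geq c_0>0$ (with $c_0$ independent of $t_0$ and $i$) together with $\sum_i|I_i|\leq\delta$, I would obtain
\begin{align*}
\int_{t_0}^{t_0+\delta}|g(\bm{x}(\tau))|^2\,d\tau\;\geq\;\frac{c_0}{2}\sum_{i=1}^n|g(\bm{x}_i)|^2\;-\;\delta\,\eta(\epsilon)^2\,\|g\|_{\mathcal{H}_X}^2 .
\end{align*}
Finally, since $g=\sum_j\hat{c}_j\mathfrak{K}_{\bm{x}_j}\in\mathcal{H}_{\Omega_n}$ and $P_{\Omega_n}$ coincides with interpolation over $\Omega_n$, one has $\bigl(g(\bm{x}_1),\dots,g(\bm{x}_n)\bigr)^T=\mathbb{K}\hat{\bm{c}}$ and $\|g\|_{\mathcal{H}_X}^2=\hat{\bm{c}}^T\mathbb{K}\hat{\bm{c}}$, so a Rayleigh-quotient estimate for the positive-definite Grammian yields $\sum_i|g(\bm{x}_i)|^2\geq\lambda_{\min}(\mathbb{K})\,\|g\|_{\mathcal{H}_X}^2$. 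Combining, the integral is at least $\gamma\|g\|_{\mathcal{H}_X}^2=\gamma\|P_{\Omega_n}g\|_{\mathcal{H}_X}^2$ with $\gamma=\tfrac12 c_0\,\lambda_{\min}(\mathbb{K})-\delta\,\eta(\epsilon)^2$, which is exactly PE.~\ref{def_PE2}.

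The main obstacle is positivity of $\gamma$: the modulus-of-continuity error $\delta\,\eta(\epsilon)^2\|g\|^2$ must be dominated by the main term $\tfrac12 c_0\lambda_{\min}(\mathbb{K})\|g\|^2$. Because $\eta(\epsilon)\to0$ as $\epsilon\to0$ while $\lambda_{\min}(\mathbb{K})$ is fixed by the (fixed) centers, this forces $\epsilon$ to be taken small enough that $\eta(\epsilon)^2<c_0\lambda_{\min}(\mathbb{K})/(2\delta)$; tracking how the hypothesis constants $c_0$ and $\delta(\epsilon)$ behave as $\epsilon$ shrinks is the delicate point to get right. I would also flag the restriction to $g\in\mathcal{H}_{\Omega_n}$: for a general $g\in\mathcal{H}_X$ carrying a large component in $\mathcal{V}_{\Omega_n}$ that oscillates between the centers and the trajectory, $\int|g(\bm{x}(\tau))|^2\,d\tau$ can be made small while $\|P_{\Omega_n}g\|$ stays bounded away from zero, so the estimate genuinely requires $g$ in (or effectively projected onto) the finite-dimensional approximating space $\mathcal{H}_{\Omega_n}$, which is precisely the space in which the estimator error lives.
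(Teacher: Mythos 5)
The paper never proves this theorem itself: it defers entirely to \cite{Kurdila1995}, remarking only that the original result holds for ``a specific class of radial basis functions'' and ``stipulates additional conditions on $\epsilon$'' that the restatement drops. So there is no in-paper argument to compare against line by line; judged on its own merits, your reconstruction is sound and is essentially the classical argument behind such sufficient conditions. The three pillars all check out: disjointness of the dwell intervals $I_i$ from $\epsilon < \tfrac{1}{2}\min_{i\neq j}\|\bm{x}_i-\bm{x}_j\|$; the kernel modulus-of-continuity bound $|g(\bm{x}(\tau))-g(\bm{x}_i)|\leq \eta(\epsilon)\|g\|_{\mathcal{H}_X}$ on $I_i$ (note this tacitly requires the radial profile $\phi$ to be decreasing near $0$ so that $\|\mathfrak{K}_{\bm{x}(\tau)}-\mathfrak{K}_{\bm{x}_i}\|_{\mathcal{H}_X}^2 = 2\phi(0)-2\phi(\|\bm{x}(\tau)-\bm{x}_i\|)\leq \eta(\epsilon)^2$ --- this is exactly the ``specific class'' of RBFs, and the Sobolev--Matern kernel used in the paper satisfies it); and the Grammian estimate $\sum_i|g(\bm{x}_i)|^2=\hat{\bm{c}}^T\mathbb{K}^2\hat{\bm{c}}\geq\lambda_{\min}(\mathbb{K})\,\hat{\bm{c}}^T\mathbb{K}\hat{\bm{c}}=\lambda_{\min}(\mathbb{K})\|g\|_{\mathcal{H}_X}^2$ for $g\in\mathcal{H}_{\Omega_n}$, which is correct since $\mathbb{K}$ is symmetric positive definite.

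More importantly, the two caveats you flag are not defects of your argument but genuine looseness in the theorem as restated, and they correspond precisely to the qualifications the paper relegates to prose. First, positivity of $\gamma = \tfrac{1}{2}c_0\lambda_{\min}(\mathbb{K})-\delta\,\eta(\epsilon)^2$ is exactly the ``additional condition on $\epsilon$'' of \cite{Kurdila1995}: for an arbitrary $\epsilon$ below half the minimal separation your bound can be vacuous, so either a smallness condition on $\epsilon$ (relative to $c_0$, $\delta$, $\lambda_{\min}(\mathbb{K})$) must be imposed, or one must argue that $\epsilon$ can be shrunk while the uniform dwell-time hypothesis survives --- which is the content of the paper's claim that such an $\epsilon$ ``always exists'' when the centers lie in $\omega^+(\bm{x}_0)$. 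Second, your restriction to $g\in\mathcal{H}_{\Omega_n}$ is the correct reading rather than a concession: PE \ref{def_PE2} as literally written quantifies over all $g\in\mathcal{H}_X$, and in that generality the conclusion is false --- a sequence of increasingly narrow bump-like functions equal to $1$ at each center has $\|P_{\Omega_n}g\|_{\mathcal{H}_X}$ fixed and positive, yet arbitrarily small trajectory integral over any fixed window, because the trajectory spends zero time exactly at the centers (a center at which it dwelt for positive time would be an equilibrium, contradicting that the other centers' neighborhoods are visited). The intended setting, consistent with the finite-dimensional statement in \cite{Kurdila1995} and with the paper's own invocation of Theorem \ref{thm_PEInfConv} (whose relevant direction requires $X=\Omega$ discrete and finite), is $\mathcal{H}_X=\mathcal{H}_n$, and in that setting your proof delivers exactly the claimed conclusion.
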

Intuitively, the above theorem states that the neighborhoods of the points in the finite PE set $\Omega_n$ are visited by the state trajectory infinitely many times. The proof of this theorem is given in \cite{Kurdila1995}, where the theorem is stated for a specific class of radial basis functions. However, the radial basis functions used in this paper, and the ones used most commonly satisfy these conditions. Furthermore, the original theorem in \cite{Kurdila1995} stipulates additional conditions on $\epsilon$. However, when the kernel centers are contained in the positive limit set $\omega^+(\bm{x}_0)$, there always exists an $\epsilon$ such that these additional conditions are satisfied. Note that the sufficient condition implies PE \ref{def_PE2}. However, when the hypotheses of Theorem \ref{thm_PEInfConv} hold, we can conclude that the sufficient condition given in Theorem \ref{thm_sc} implies PE \ref{def_PE1}. While implementing the adaptive estimator, if the actual function $f \in \mathcal{H}_X$, where $X$ is an infinite set, the sufficient condition given in Theorem \ref{thm_sc} only implies ultimate boundedness of the function estimate instead of convergence, in particular when we use the dead zone gradient law.

%
\subsection{Center Selection Problem and Example}
\label{ssec_cenProb}
In the last section, we made no assumption about the space in which function estimate $\hat{f}(t)$ lives. The function estimate $\hat{f}(t)$ can live in $\mathcal{H}_X$ and is not restricted to $\mathcal{H}_\Omega$. This leads us to ask the question of why it is necessary for the kernel centers (elements of the set $\Omega_n$) to be contained in the set $\Omega$. It is indeed possible to approximate the function $\hat{f}(t)$ using kernel centers that are outside of the set $\Omega$. However, if the centers are contained in the set $\Omega$, the function estimate will converge to the actual function values at those centers. Before we take a look at the next theorem, note that the basis of the space $\mathcal{H}_{\Omega_n}$ is the set $\{ \mathfrak{K}_{\bm{x}_i}| \bm{x}_i \in \Omega_n\}$. This implies that the functions $P_{\Omega_n} f$ and $\hat{f}_n(t)$ have the form $P_{\Omega_n} f = \sum_{i=1}^n \alpha_i \mathfrak{K}_{\bm{x}_i}$ and $\hat{f}_n(t) = \sum_{i=1}^n \hat{\alpha}_i(t) \mathfrak{K}_{\bm{x}_i}$.

\begin{theorem}
\label{thm_pwconv}
Suppose the set $\Omega$ is persistently exciting, and the set $\Omega_n \subseteq \Omega$. Then $\lim_{t \to \infty} \hat{f}_n(t,\bm{x}_i) = f(\bm{x}_i)$ for all $\bm{x} \in \Omega_n$ and $i \in \{ 1,\ldots,n \}$. Furthermore, for $i \in \{ 1,\ldots,n \}$, $\hat{\alpha}_i(t) \to \alpha_i$ as $t \to \infty$.
\end{theorem}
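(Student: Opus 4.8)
The plan is to reduce both claims to the pointwise convergence of the finite-dimensional function error at the centers, and then to exploit two structural facts special to the finite set $\Omega_n$. The first is the interpolation identity from Subsection~\ref{ssec_RKHS}: because $P_{\Omega_n}$ coincides with interpolation over $\Omega_n$, we have $(P_{\Omega_n}f)(\bm{x}_i)=f(\bm{x}_i)$ at every center $\bm{x}_i\in\Omega_n$, so the projection coefficients $\alpha_i$ reproduce the true values $f(\bm{x}_i)$. The second is that, setting $\bm{f}:=\{f(\bm{x}_1),\ldots,f(\bm{x}_n)\}^T$ and $\hat{\bm{f}}_n(t):=\{\hat{f}_n(t,\bm{x}_1),\ldots,\hat{f}_n(t,\bm{x}_n)\}^T$, the representations $P_{\Omega_n}f=\sum_i\alpha_i\mathfrak{K}_{\bm{x}_i}$ and $\hat{f}_n(t)=\sum_i\hat{\alpha}_i(t)\mathfrak{K}_{\bm{x}_i}$ combined with the reproducing property yield the linear identities $\bm{f}=\mathbb{K}\bm{\alpha}$ and $\hat{\bm{f}}_n(t)=\mathbb{K}\hat{\bm{\alpha}}(t)$, where $\mathbb{K}$ is the Grammian. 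Since $\mathbb{K}$ is symmetric positive-definite, and hence invertible, pointwise convergence at the centers and coefficient convergence are equivalent; I would therefore establish the former and read off the latter.

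First I would verify that $\Omega_n$ inherits persistence of excitation from $\Omega$. As $\Omega_n\subseteq\Omega$ forces $\mathcal{H}_{\Omega_n}\subseteq\mathcal{H}_\Omega$ and hence $\|P_{\Omega_n}g\|_{\mathcal{H}_X}\leq\|P_\Omega g\|_{\mathcal{H}_X}$ for all $g\in\mathcal{H}_X$, any lower bound of the form in Definition PE.~\ref{def_PE1} holding for $\Omega$ holds verbatim for $\Omega_n$ with the same constants. Regarding the finite-dimensional estimator of Subsection~\ref{ssec_RKHSapprox} as a self-contained RKHS adaptive estimator on $\mathcal{H}_{\Omega_n}$, I would then invoke Theorem~\ref{thm_PEInfConv} with $\Omega_n$ in place of $\Omega$ to obtain $\|P_{\Omega_n}\tilde{f}_n(t)\|_{\mathcal{H}_X}\to0$, where $\tilde{f}_n:=f-\hat{f}_n$. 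Because $\hat{f}_n(t)\in\mathcal{H}_{\Omega_n}$, we have $P_{\Omega_n}\tilde{f}_n(t)=P_{\Omega_n}f-\hat{f}_n(t)$; evaluating at a center and using the uniform bound $\|\mathcal{E}_{\bm{x}_i}\|\leq\bar{k}$ exactly as in the proof of Corollary~\ref{cor_PEPWconv} gives $|\hat{f}_n(t,\bm{x}_i)-(P_{\Omega_n}f)(\bm{x}_i)|\leq\bar{k}\,\|P_{\Omega_n}\tilde{f}_n(t)\|_{\mathcal{H}_X}\to0$. With the interpolation identity $(P_{\Omega_n}f)(\bm{x}_i)=f(\bm{x}_i)$, this proves $\hat{f}_n(t,\bm{x}_i)\to f(\bm{x}_i)$, the first assertion.

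The second assertion then follows by applying $\mathbb{K}^{-1}$: from $\bm{f}=\mathbb{K}\bm{\alpha}$ and $\hat{\bm{f}}_n(t)=\mathbb{K}\hat{\bm{\alpha}}(t)$ we get $\hat{\bm{\alpha}}(t)-\bm{\alpha}=\mathbb{K}^{-1}\bigl(\hat{\bm{f}}_n(t)-\bm{f}\bigr)$, and the entrywise convergence $\hat{\bm{f}}_n(t)\to\bm{f}$ just established forces $\hat{\alpha}_i(t)\to\alpha_i$ for each $i$. The step I expect to be the main obstacle is the claim $\|P_{\Omega_n}\tilde{f}_n(t)\|_{\mathcal{H}_X}\to0$. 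The finite-dimensional estimator is driven by the true plant $\dot{\bm{x}}=A\bm{x}+Bf(\bm{x})$, so along the trajectory its error dynamics carry the residual $\mathcal{E}_{\bm{x}(t)}(I-P_{\Omega_n})f$, which need not vanish unless $f\in\mathcal{H}_{\Omega_n}$. The delicate point is therefore to show that this residual does not corrupt the identifiable in-space component $P_{\Omega_n}\tilde{f}_n$ --- that is, that the PE lower bound on $\Omega_n$ still pins that component to zero rather than merely bounding it --- and it is precisely this estimate that converts the abstract RKHS-norm convergence into the observable pointwise limits at the centers and the coefficient limits $\hat{\alpha}_i(t)\to\alpha_i$.
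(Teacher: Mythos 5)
Your proof breaks down at exactly the step you flag as "the main obstacle," and that obstacle is not a technicality that can be patched — it is the difference between your reading of the theorem and the paper's. You interpret $\hat{f}_n(t)$ as the solution of the self-contained finite-dimensional estimator of Subsection \ref{ssec_RKHSapprox} and propose to get $\|P_{\Omega_n}\tilde{f}_n(t)\|_{\mathcal{H}_X}\to 0$ by invoking Theorem \ref{thm_PEInfConv} with $\Omega_n$ in place of $\Omega$. That invocation is not legitimate: Theorem \ref{thm_PEInfConv} concerns the homogeneous error dynamics (\ref{eq_errEst}), which presuppose that the function being estimated lies in the estimator's hypothesis space. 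For the standalone estimator on $\mathcal{H}_{\Omega_n}$, the unknown $f$ generically does not lie in $\mathcal{H}_{\Omega_n}$, so the error equations carry the persistent forcing term $B\mathcal{E}_{\bm{x}(t)}(I-P_{\Omega_n})f$ that you yourself identify. Under PE, a persistently disturbed system of this kind yields at best ultimate boundedness of the parameter error, with a bound proportional to the residual $\|(I-P_{\Omega_n})f\|$ — a caveat the paper itself states at the end of Subsection \ref{ssec_RKHSapprox} ("only implies ultimate boundedness of the function estimate instead of convergence"). So the exact limits $\hat{\alpha}_i(t)\to\alpha_i$ you are after are, for that system, false in general; no argument along your proposed lines can establish them without additional hypotheses.

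The paper avoids this entirely by a different definition of the object in the theorem: its proof opens with $\hat{f}_n(t) := P_{\Omega_n}\hat{f}(t)$, i.e., the projection of the \emph{infinite-dimensional} estimate, not the output of the finite-dimensional ODE system. With that reading no new dynamics need to be analyzed at all: projection onto a finite set coincides with interpolation, so $\hat{f}_n(t,\bm{x}_i)=\hat{f}(t,\bm{x}_i)$ at every center, and Corollary \ref{cor_PEPWconv} applied to $\Omega$ (with $\Omega_n\subseteq\Omega$ — your PE-inheritance lemma, while correct, is not even needed) gives $\hat{f}(t,\bm{x}_i)\to f(\bm{x}_i)$. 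Your two structural observations are sound and worth keeping: the subset-inherits-PE argument via $\|P_{\Omega_n}g\|_{\mathcal{H}_X}\leq\|P_{\Omega}g\|_{\mathcal{H}_X}$ is valid, and the Grammian identities $\bm{f}=\mathbb{K}\bm{\alpha}$, $\hat{\bm{f}}_n(t)=\mathbb{K}\hat{\bm{\alpha}}(t)$ give a cleaner, more explicit justification of the coefficient convergence than the paper's one-line appeal to the kernels forming a basis. But they do not repair the central step. If you insist on proving a statement about the estimator that is actually integrated in practice, you would be proving something strictly stronger than Theorem \ref{thm_pwconv}, and you should expect only a neighborhood result, with bias quantified by $\|(I-P_{\Omega_n})f\|$ (e.g., via Theorem \ref{thm_approxrates} combined with a robustness-type perturbation argument), rather than exact convergence.
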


\begin{proof}
Recall that $\hat{f}_n(t) := P_{\Omega_n} \hat{f}(t)$, where $P_{\Omega_n} : \mathcal{H}_X \to \mathcal{H}_{\Omega_n}$. The set $\Omega_n$ is discrete and finite. In RKHS, the projection operator from infinite-dimensional space to a finite-dimensional space coincides with the interpolation operator. In other words, for a given $t$, we have $\hat{f}_n(t,\bm{x}_i) = \hat{f}(t,\bm{x}_i)$ for all $\bm{x}_i \in \Omega_n$. From Corollary \ref{cor_PEPWconv}, we have $\lim_{t \to \infty} \hat{f}_n(t,\bm{x}_i) = f(\bm{x}_i)$ for all $\bm{x}_i \in \Omega_n$. This in turn implies that, for $i \in \{ 1,\ldots,n \}$, the coefficients $\hat{\alpha}_i(t)$ converge to $\alpha_i$ as $t \to \infty$ since  the set $\{ \mathfrak{K}_{\bm{x}_i}| \bm{x}_i \in \Omega_n\}$ forms the basis of the space $\mathcal{H}_{\Omega_n}$.
\end{proof}

The  above theorem shows that selecting kernel centers in the PE set $\Omega$ will result in the approximated function estimate $\hat{f}_n(t)$ approaching the actual function value at the kernel centers. In addition to this fact, the theory on approximation rates holds only when the kernel centers are contained in the set $\Omega$. This makes it advantageous to choose $\Omega_n \subseteq \Omega$. The following example helps us understand what happens when the kernel center is not exactly in the persistently excited set. The example considers the case where $\Omega$ is a singleton set. The analysis for more general PE sets is analogous to the one given below.

\begin{example}
Suppose the persistently excited set $\Omega = \{ \bm{\xi} \}$. Suppose the kernel center is at $\Omega_n = \{ \hat{\bm{\xi}} \}$. According to Corollary \ref{cor_PEPWconv}, given $\epsilon > 0$, there exists a $T_0$ such that for any $t > T_0$, $|f(\bm{\xi}) - \hat{f}(t,\bm{\xi})| < \epsilon$. Suppose we stop the adaptive estimator at $T > T_0$. We know that by the properties of RKHS, $\hat{f}(T,\hat{\bm{\xi}}) = \hat{f}_n(T,\hat{\bm{\xi}})$. Since $\hat{f}$ and $\hat{f}_n$ are continuous, given $\epsilon > 0$, there exists $\delta$ such that if $\|\bm{\xi} - \hat{\bm{\xi}} \| < \delta$, then $| \hat{f}(T,\bm{\xi}) - \hat{f}(T,\hat{\bm{\xi}}) | < \epsilon$ and $| \hat{f}_n(T,\bm{\xi}) - \hat{f}_n(T,\hat{\bm{\xi}}) | < \epsilon$. Thus, we conclude that if $\|\bm{\xi} - \hat{\bm{\xi}} \| < \delta$, then $|f(\bm{\xi}) - \hat{f}_n(T,\bm{\xi})| < 3 \epsilon$. Note, as $\hat{\bm{\xi}} \to \bm{\xi}$, $|f(\bm{\xi}) - \hat{f}_n(T,\bm{\xi})|$ approaches a value that is strictly less than $\epsilon$. 
\end{example}

\subsection{Center Selection Criteria}
\label{ssec_centCrit}
Based on the theory presented in the previous subsections, we list the following criteria for choosing the kernel centers.
\begin{enumerate}[label = (C{\arabic*})]
    \item The kernel centers should be contained in or be as close as possible to the positive limit set (based on Theorem \ref{thm_PElimitset} provided Condition \ref{cond_sepSets} holds).
    \item The kernel centers should be evenly distributed when possible. There are two reasons for selecting this criteria. 
    \begin{enumerate}[label=(\roman*)]
        \item The linear dependency of the kernels will be high if the centers are placed too close to each other. This will increase the condition number of the Grammian matrix in Equation \ref{eq_FinLLaw}.
        \item On the other hand, if the centers are too far apart, the fill distance increases, which in turn reduces the approximation rates based on Theorem \ref{thm_approxrates}.
    \end{enumerate}
    \item The neighborhood of the centers should be visited by the state trajectory regularly. This is to satisfy the sufficient condition for PE based on Theorem \ref{thm_suff} provided Condition \ref{cond_rbf} holds.
\end{enumerate}
\textit{Note:} The above listed criteria assumes knowledge of the positive limit set and the state-trajectory.

\subsection{Example: The case when we have a priori knowledge of positive limit set}
\label{ssec_sampex}
\begin{figure}
\centering
\includegraphics[scale = 0.5]{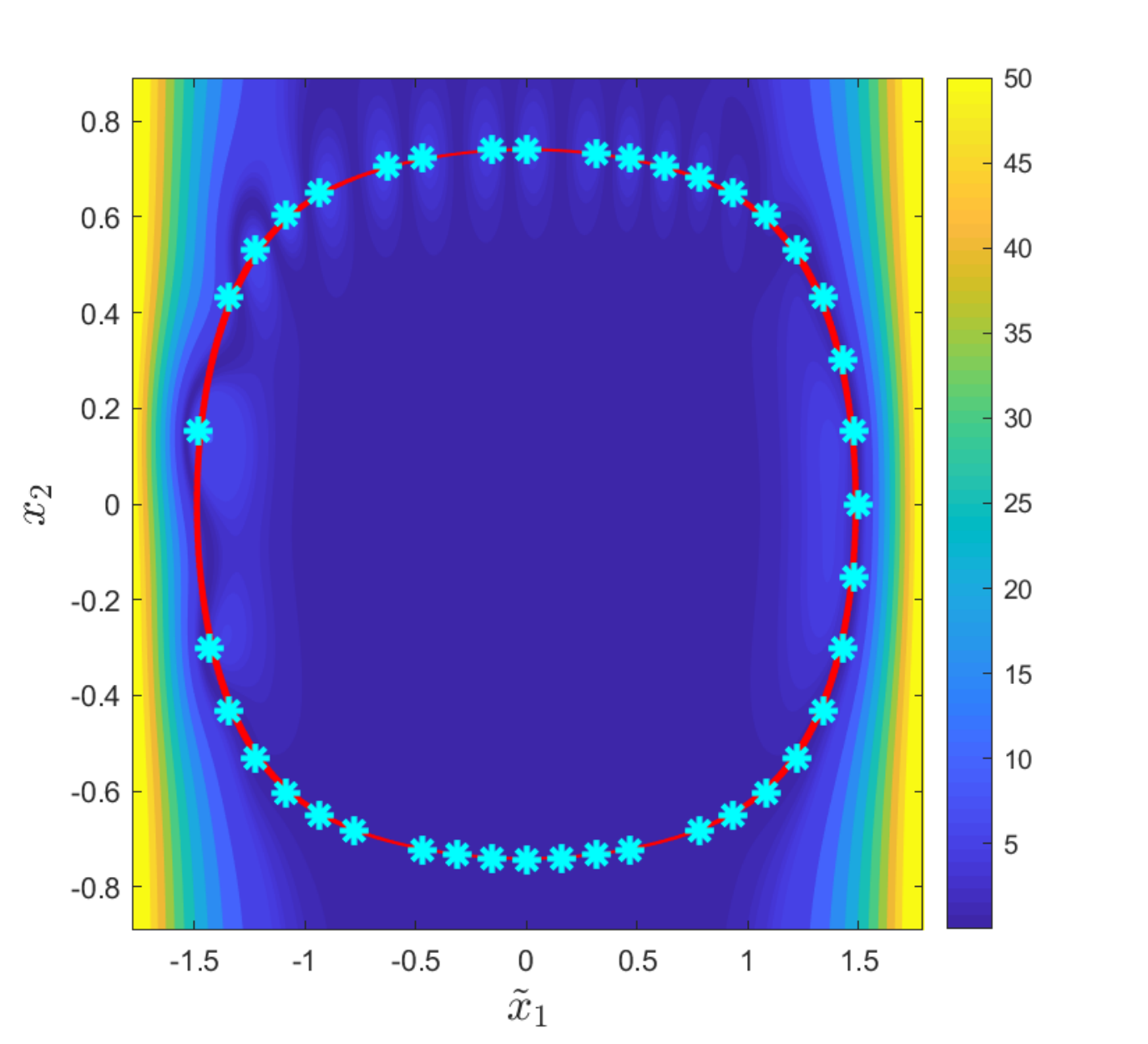}
\caption{Random Centers - Pointwise error $|f(\bm{x}) - \hat{f}_n(T,\bm{x})|$. The marker $*$ and the red line represent the kernel centers and the limit set, respectively.}
\label{fig_Ex1Rand}
\end{figure}

\begin{figure}
\centering
\includegraphics[scale = 0.5]{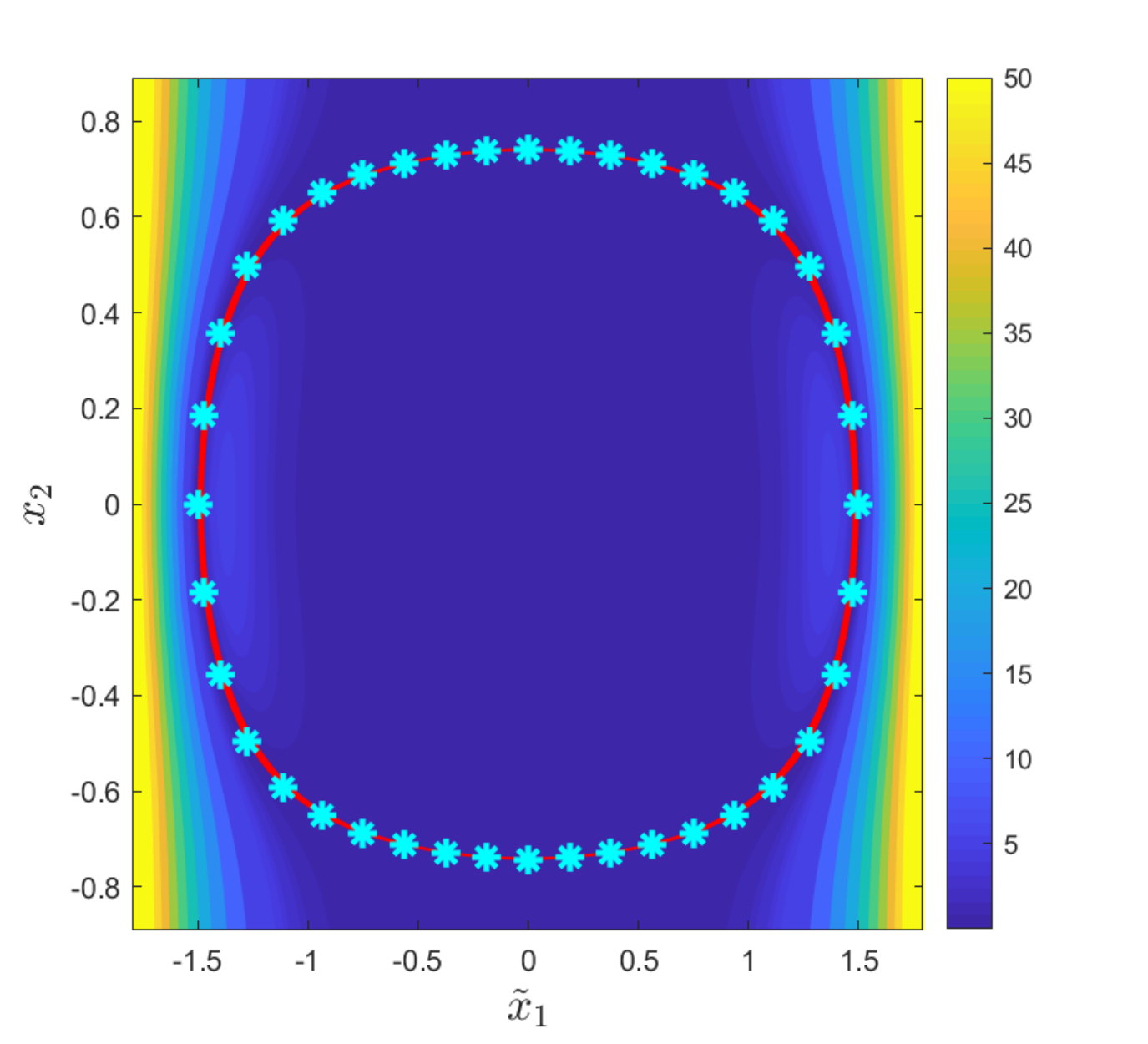}
\caption{Uniform Centers - Pointwise error $|f(\bm{x}) - \hat{f}_n(T,\bm{x})|$. The marker $*$ and the red line represent the kernel centers and the limit set, respectively.}
\label{fig_Ex1Unif}
\end{figure}

We test the above listed criteria on a simple practical example. We consider a nonlinear single-mode undamped piezoelectric oscillator with no input to test the above criteria. The governing equations have the form

\begin{align}
    \begin{Bmatrix}
    \dot{x}_1 \\ \dot{x}_2
    \end{Bmatrix}
    &=
    \underbrace{\begin{bmatrix}
    0 & 1 \\
    -\frac{\hat{K}}{M} & -\frac{C}{M}
    \end{bmatrix}}_{A}
    \begin{Bmatrix}
    x_1 \\ x_2
    \end{Bmatrix}
    +
    \underbrace{\begin{Bmatrix}
    0 \\ -\frac{P}{M}
    \end{Bmatrix}}_{\mathcal{B}} \underbrace{\ddot{\mathrm{z}}(t)}_{\mathrm{u}(t)}
    \notag
    \\
    & \hspace{0.5in}
    +
    \underbrace{\begin{Bmatrix}
    0 \\ 1
    \end{Bmatrix}}_{B}
    \underbrace{ 
    \left(
    - \frac{\hat{K}_{N_1}}{M} x_1^3(t) - \frac{\hat{K}_{N_2}}{M} x_1^5(t) 
    \right)
    }_{f(\bm{x}(t))},
    \label{eq_piezomodel}
\end{align}
where $M, \hat{K}, C, P$ are the modal mass, modal stiffness, modal damping, and modal input contribution term of the piezoelectric oscillator. The variables $\hat{K}_{N_1}, \hat{K}_{N_2}$ are the nonlinear stiffness terms. The terms $x_1$, $x_2$ and $\mathrm{z}$ are the modal displacement, modal velocity and base displacement of the oscillator, respectively. The steps involved in deriving the above governing equations can be found in \cite{Paruchuri2020Piezo}. Typically, the magnitudes of the velocity and displacement values are not of the same order. In such cases, we have to use kernels that are skewed in a particular direction. Alternatively, we scale one of the states as $x_1 = S \tilde{x}_1$, where $S$ is a positive constant. Note, after scaling, $\bm{x}(t) := \{ \tilde{x}_1(t),x_2(t) \}^T$. In our simulations, we choose $M = 0.9745$, $\hat{K} = 329.9006$, $\hat{K}_{N_1} = -1.2901 \times 10^{5}$ and $\hat{K}_{N_2} = 1.2053 \times 10^{9}$. For the undamped, no input case, i.e., $C = 0 $ and $P = 0$, the total energy is conserved. In other words, the trajectory is always contained in the limit set $\omega^+(\bm{x}_0)$, where $\bm{x}_0 \in \mathbb{R}^2$ is the initial condition. Note that any arbitrary discrete finite set in $\omega^+(\bm{x}_0)$ is visited by the state trajectory infinitely many times.

\begin{figure}
\centering
\includegraphics[scale = 0.5]{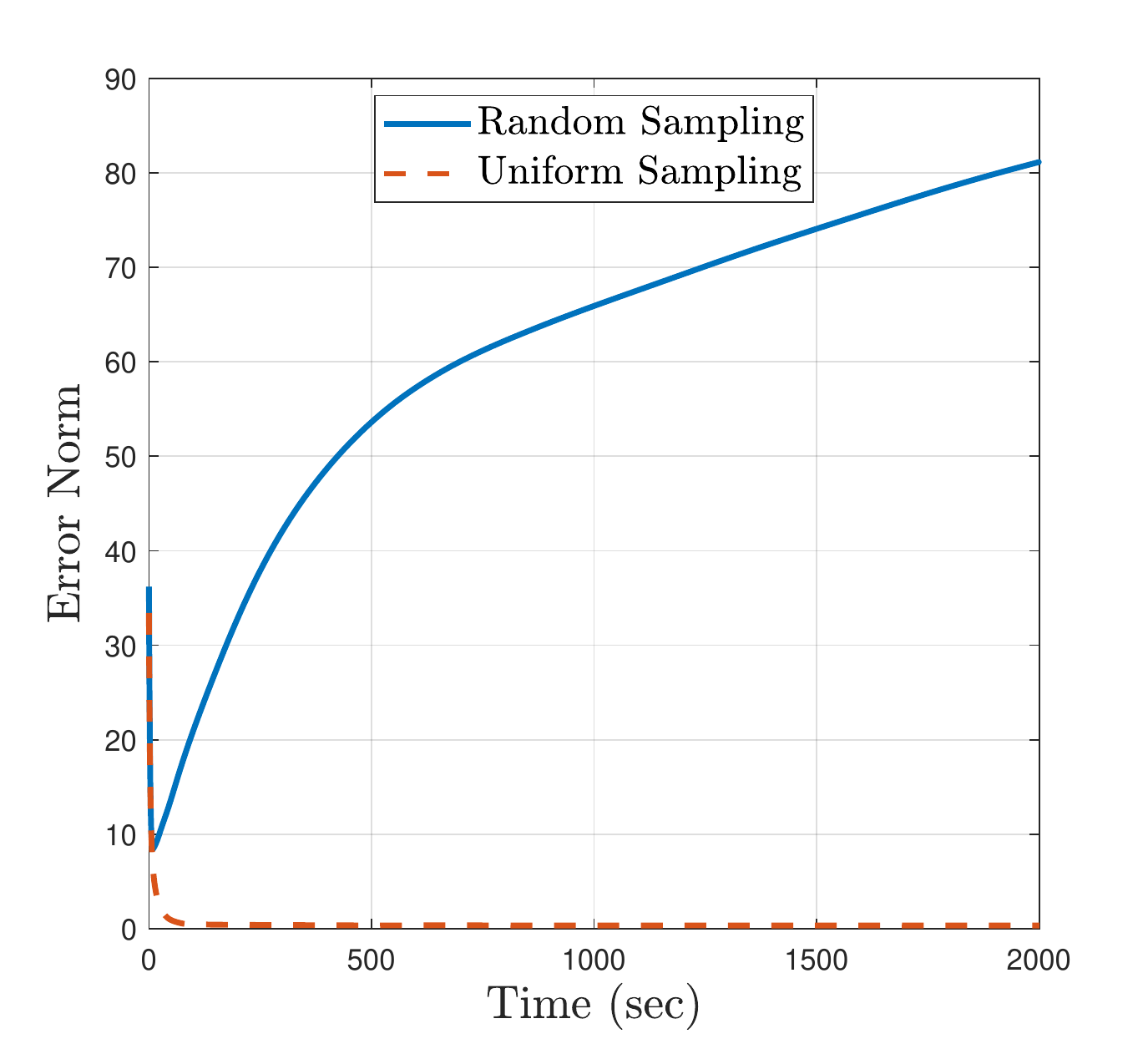}
\caption{Variation of $\| \bm{\alpha} - \hat{\bm{\alpha}}(t) \|_{\mathbb{R}^n}$ with time.}
\label{fig_Ex1ErrNorm}
\end{figure}

Since we have a priori knowledge of the limit set $\omega^+(\bm{x}_0)$ for a given initial condition, we choose kernel centers in the set $\Omega$ and integrate the equations
\begin{align*}
    \dot{\hat{\bm{x}}}_n(t) &= A \hat{\bm{x}}_n(t) + B \hat{\bm{\alpha}}^T(t) \bm{\mathcal{K}}(\bm{x}_c, \bm{x}(t)),
    \\
    \dot{\hat{\bm{\alpha}}}(t) &= \mathbb{K}^{-1} \bm{\Gamma}^{-1} \bm{\mathcal{K}}(\bm{x}_{c},\bm{x}(t)) B^* P \tilde{\bm{x}}_n(t)
\end{align*}
over the interval $[0,T]$ for some $T > 0$. In all our simulations, we use the Sobolev-Matern $3,2$ kernel, which has the form
\begin{align*}
    \mathcal{K}_{3,2}(\bm{x},\bm{y}) &= \left(1 + \frac{\sqrt{3} \|\bm{x}-\bm{y}\|}{l}\right)\exp{\left(-\frac{\sqrt{3} \|\bm{x}-\bm{y}\|}{l}\right)},
\end{align*}
where $l$ is the scaling factor of length. \cite{Rasmussen2003} 

To analyze the above-listed criteria's effectiveness, we tested the adaptive estimator with a random and uniform collection of kernel centers. We set $S = 0.02$, $l = 0.2$, $\Gamma = 0.001$ and $n = 40$. The states and the parameters are initialized at $\bm{x}_0 = \{ 1.5, 0 \}^T$ and $\alpha_i(0) = 1$ for $i = 1,\ldots,n$, respectively. For uniform kernel center selection, we first calculate the distance between two adjacent kernel centers $l_n$ when they are distributed uniformly in the positive limit set. Since we know the exact equation of the positive limit set, \cite{jia2020a} we can calculate the total length and hence the length of the arc between two adjacent kernel centers. Given a kernel center, we choose the adjacent kernel center at a distance $l_n$. We repeat this procedure until we choose the required number of kernel centers that are distributed uniformly in the positive limit set. For choosing the kernel centers for the random case, we first ran the uniform center selection algorithm for $n = 48$ case, and then used the MATLAB function \mcode{randperm} to select $n = 40$ kernel centers randomly. Note that the MATLAB function \mcode{randperm} uses a uniform pseudorandom number generator algorithm.

Figures \ref{fig_Ex1Rand} and \ref{fig_Ex1Unif} show the pointwise error $|f(\bm{x}) - \hat{f}_n(T,\bm{x})|$ after running the adaptive estimator for $T = 2000$ seconds for a paritcular case of random and uniform selection of kernel centers. It is clear from the figures that the pointwise error is low in the case of uniform sampling. Figure \ref{fig_Ex1ErrNorm} shows how the norm $\| \bm{\alpha} - \hat{\bm{\alpha}}(t) \|_{\mathbb{R}^n}$ varies with time $t$ for both the random and uniform center selection methods. From Theorem \ref{thm_pwconv}, we know that $\hat{\bm{\alpha}}(t) \to \bm{\alpha}$, where $\bm{\alpha} = \{\alpha_1,\ldots,\alpha_n\}^T$ and $\hat{\bm{\alpha}}(t) = \{\hat{\alpha}_1(t),\ldots,\hat{\alpha}_n(t)\}^T$. It is clear from Figure \ref{fig_Ex1ErrNorm} that the coefficient error norm converges rapidly to zero for the uniform centers case. For the random centers case, the error norm does not even start converging in the first $2000$ seconds.

In the above problem, it is assumed that we have an explicit equation for the positive limit set $\omega^+(\bm{x}_0)$ for a given initial condition $\bm{x}_0$. Furthermore, the state trajectory is contained in the set $\omega^+(\bm{x}_0)$. This makes it possible to choose kernel centers that are uniformly distributed. In most practical examples, we cannot derive an explicit expression for the set $\omega^+(\bm{x}_0)$. We only have samples of the state-trajectory that is contained in or converges to the positive limit set $\omega^+(\bm{x}_0)$. In the following two sections, we present kernel center selection methods that can be implemented when we do not have explicit knowledge of the positive limit set or when the state trajectory is not contained in the positive limit set. Both methods are applicable to systems for which the state trajectory visits the neighborhoods of all the points in the positive limit set $\omega^+(\bm{x}_0)$. We next consider algorithms that do not rely on a priori knowledge of the positive limit set $\omega^+(\bm{x}_0)$.

%
\section{Method 1: Based on CVT and Lloyd's Algorithm}
\label{sec_CVT}
The first method we propose is based on building centroidal Voronoi tessellations (CVT) around the positive limit set. This method relies on samples taken in the positive limit set. We implement this approach for systems where the state-trajectory is contained in the positive limit set or converges to the same in finite time. We assume that there is a dense sampling $\Xi$ of the positive limit set, i.e. $\overline{\Xi} = \omega^+(\bm{x}_0)$. Let $\{ \Xi_m \}_{m=1}^\infty$ be a sequence of finite subsets of $\Xi$ such that $\Xi_m \subset \Xi_{m+1}$ for all $m \in \mathbb{N}$ and $\cup_{m=1}^\infty \Xi_m = \Xi$, where $\Xi_m = \{ \bm{\xi}_1,\ldots,\bm{\xi}_{q_m} \}$. The term $q_m$ represents the number of samples in the set $\Xi_m$. Given a set of samples $\Xi_m$, we construct a region $Q_m$ that is assumed to enclose the positive limit set. Before we go into the details of implementation, let us take a look at the theory behind Voronoi partitions.
%
%
\subsection{Voronoi Partition}
Suppose the state-space $X$ is endowed with the metric $d(\cdot,\cdot)$. In this paper, we use the Euclidean metric. Let $Q_m \subseteq X$ be a convex polytope and let $P_m = \{\bm{p}_{m,1},\ldots,\bm{p}_{m,n_m} \}$ be a set of $n_m$ points. The Voronoi partition $\mathcal{V}(P_m)$ generated by the set of points $P_m$ is the collection of $n_m$ polytopes, $P_{m,1}, \ldots, P_{m,n_m}$, defined by
\begin{align*}
P_{m,i} &= \left \{ \bm{x} \in Q_m \mid d(\bm{x},\bm{x}_i) \leq d(\bm{x},\bm{x}_j), \right. \\
& \hspace{1in} \left. \text{ for } j = 1,\ldots, n_m, j \neq i \right \}
\end{align*}
for $i = 1,\ldots,n_m$. An edge of the polytope $P_{m,i}$ is the region $P_{m,i} \cap P_{m,j}$ or $P_{m,i} \cap \partial Q_m$ for some $j \neq i$. We say that two polytopes are adjacent when they share a common edge. The notation $\partial Q_m$ denotes the boundary of the region $Q_m$. We use the notation $\mathbb{E}(\mathcal{V}(P_m),Q_m)$ to denote the union of all edges of the polytopes in $\mathcal{V}(P_m)$. If $R \subseteq Q_m$, then $\mathbb{E}(\mathcal{V}(P_m),R) = \mathbb{E}(\mathcal{V}(P_m),Q_m) \cap R$. A particular class of Voronoi partitions are the \emph{centroidal Voronoi partitions} or \emph{centroidal Voronoi tessellations}, where each point generating the polytope is also its centroid. We use the notation $C_{P_{m,j}}$ to denote the centroid that generates the polytope $P_{m,j}$. Note, given a region $Y \subseteq X$ in the state-space, its centroid $C_{Y}$ is defined as
\begin{align*}
C_Y = \frac{1}{M_Y} \int_{Y} \bm{y} \rho (\bm{y}) d\bm{y},
\end{align*}
where $M_Y := \int_Y \rho(\bm{y}) d\bm{y}$ is the total mass of $Y$, and $\rho(\bm{y})$ is the mass density function over $Y$. When the polytope $Q_m$ is convex, the partitions are also convex. This in turn implies that the centroid of each partition is contained inside the polytope. For a fixed number of partitions $n_m$, a convex polytope $Q_m$ can have more than one centroidal Voronoi partition. While implementing this method for kernel center selection, the term $n_m$ corresponds to the number of centers. The subscript $m$ corresponds to the sampling subset $\Xi_m$. The number of kernel centers depends on the samples collected in this method. 
%
%
\subsection{Lloyd's algorithm}
Lloyd's algorithm is used to construct the centroidal Voronoi tessellations for a given convex polytope $Q_m$ and a fixed number of partitions $n_m$. It involves the following steps,
\begin{enumerate}[label=(\roman*)]
\item Choose an initial set of points $P_m$.
\item Calculate the Voronoi partitions $\mathcal{V}(P_m)$ for the $n_m$ points.
\item Calculate the set of centroids $\{ C_{P_{m,1}},\ldots,C_{P_{m,n_m}} \}$ of the Voronoi partitions.
\item Set $P_m = \{ C_{P_{m,1}},\ldots,C_{P_{m,n_m}} \}$ and go back to the second step.
\end{enumerate}
The above set of steps are evaluated until convergence of centroids is achieved. The convergence of the algorithm for the convex case is proved in \cite{Cortes2004}. 
%
\subsection{Implementation}

\begin{figure}[htb!]
\centering
\includegraphics[scale = 0.5]{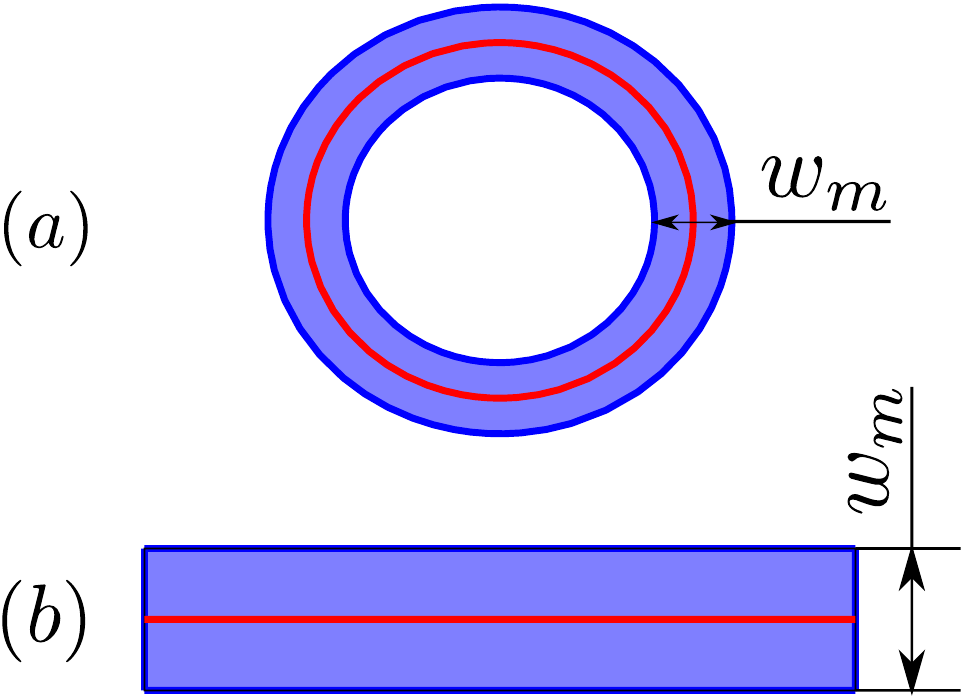}
\caption{Examples of region $Q_m$ constructed around the $\Xi_m \subseteq \omega^+(\bm{x}_0)$. The red curves are formed by connecting the samples $\Xi_m$. The blue region represents the region $Q_m$.}
\label{fig_LloydsRegion}
\end{figure}

The idea behind this approach is that we have a finite sampling $\Xi_m$ of the positive limit set $\omega^+(\bm{x}_0)$. We use this finite sampling $\Xi_m$ to construct a region $Q_m$ that encloses the positive limit set $\omega^+(\bm{x}_0)$. We then calculate the centroidal Voronoi partitions of the polygon and choose the kernel centers as the centroids of the partitions. In our implementation, we assume the mass density function as $\rho(\bm{q}) = 1$ for all $\bm{q} \in Q_m$ and $\rho(\bm{q}) = 0$ elsewhere. In the following discussion, we formalize this implementation.

Examples of the region $Q_m$ for two different positive limit sets is shown in Figure \ref{fig_LloydsRegion}. In the case (b) where the positive limit set $\omega^+(\bm{x}_0)$ is straight line, the region $Q_m$ is nothing but the rectangle enclosing the set. For the case (a) where the positive limit set $\omega^+(\bm{x}_0)$ is a closed curve that is symmetric about the origin in the figure, the region $Q_m$ is first formed by the joining the samples of the positive limit set to form a closed curve. The closed curve is then scaled to form a larger and smaller closed curves. We choose $Q_m$ to be the region enclosed by the larger and smaller closed curves. As evident from Figure \ref{fig_LloydsRegion}, the region $Q_m$ is not always convex. Thus, the theory in the previous subsection is not strictly applicable. Let $Q_m'$ be the convex hull of the polytope $Q_m$. We know that the Lloyd's algorithm converges for the convex case. \cite{Cortes2004} The mass density function is still equal to $1$ on $Q_m$ and $0$ elsewhere. Suppose we choose $n_m$ points in $Q_m'$ and run the Lloyd's algorithm. As a result, we get a set of centroids $P_m'$ that generate the centroidal Voronoi partition $\mathcal{V}(P_m')$. Now we define the collection $\mathcal{V}(P_m) := \{ P_{m,1}'\cap Q_m, \ldots, P_{m,n_m}' \cap Q_m \}$. It is easy to see that $\mathcal{V}(P_m)$ is a centroidal Voronoi partition of the region $Q_m$ generated by the centroids $P_m = P_m'$.

Thus, the Lloyd's algorithm indeed converges for the case in question. However, the polytopes in $\mathcal{V}(P_m)$ are not necessarily convex. And hence, the centroid $p_{m,i} \in P_m$ need not be contained in the polytope $P_{m,i}'\cap Q_m$ for $i = 1,\ldots,n_m$. The centers need not even be contained in the region $Q_m$. This is certainly not desirable when implementing Lloyd's algorithm and CVT for problems like sensor location or multirobot coordination. \cite{Breitenmoser2010} However, the goal of our problem is to choose kernel centers that are close to the positive limit set. In the following analysis, we show that with sufficient number of samples and careful selection of the the region $Q_m$, we can often choose centers close to the positive limit set.

\subsection{Convergence for Restricted Cases}
We restrict the following analysis to positive limit sets contained in $\mathbb{R}^2$ that are homeomorphic to a line or a circle. In other words, the positive limit set is an open or closed curve. With careful selection of $Q_m$, it is possible to show that we can choose kernel centers that approximate the positive limit set. The region $Q_m$ is constructed such that the following conditions holds.
\begin{condition}
\label{cond_lloyds}
Associated with each $\Xi_m$ is a region $Q_m$ such that 
\begin{enumerate}
    \item the maximum width $w_m$ of the region satisfies $w_m < r_m$, where $0< r_m < r_{m-1}$ for all $m \in \mathbb{N}$,
    \item the region $Q_m$ is nested in $Q_{m-1}$ for all $m \in \mathbb{N}$,
    \item the sequence $\{r_m\}_{n=1}^\infty$ converges to $0$,
    \item for each $r_m$, there is an integer $n_m$ such that the polytope $P_{m,j} \subseteq B_{c r_m}(C_{P_{m,j}})$ for all $j = 1,\ldots,n_m $. Here, the term $B_{c r_m}(C_{P_{m,j}})$ is the closed ball of radius $c r_m$ centered at the centroid $C_{P_{m,j}}$ that generates the polytope $P_{m,j}$ with $c$ a fixed positive constant.
\end{enumerate}
\end{condition}

We can think of the maximum width $w_m$ of the region $Q_m$ given in Figure \ref{fig_LloydsRegion} (a) as the Hausdorff distance between the inner and outer boundaries of the region $Q_m$. In the case of the region given in Figure \ref{fig_LloydsRegion} (b), the maximum width $w_m$ corresponds to the Hausdorff distance between the two boundaries of the region $Q_m$ that are parallel to the positive limit set.

\begin{theorem}
Suppose Condition \ref{cond_lloyds} holds. Then $d(\omega^+(\bm{x}_0), P_m) \to 0$ as $m \to \infty$, where $d(\cdot,\cdot)$ is the Hausdorff distance, $\omega^+(\bm{x}_0)$ is the positive limit set and $P_m = \{ C_{P_{m,1}},\ldots,C_{P_{m,n_m}} \}$ is the set of centroids that generate the CVT $\mathcal{V}(P_m)$.
\end{theorem}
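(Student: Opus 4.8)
The plan is to bound the Hausdorff distance directly through its two directed components and to show each is controlled by $r_m$. Writing $d$ for the Euclidean metric as well as for the Hausdorff distance, recall that
\begin{align*}
d(\omega^+(\bm{x}_0),P_m) = \max\left\{ \sup_{\bm{x}\in\omega^+(\bm{x}_0)} \min_{1\le j\le n_m} \|\bm{x}-C_{P_{m,j}}\|,\ \max_{1\le j\le n_m} \inf_{\bm{x}\in\omega^+(\bm{x}_0)} \|C_{P_{m,j}}-\bm{x}\| \right\},
\end{align*}
so it suffices to drive both terms to zero. The whole argument hinges on part 4 of Condition \ref{cond_lloyds}, which forces every cell $P_{m,j}$ to sit inside a ball of radius $cr_m$ about its generating centroid, i.e. to have diameter at most $2cr_m$.

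First I would handle the ``covering'' term. Since $Q_m$ is built to enclose the limit set, $\omega^+(\bm{x}_0)\subseteq Q_m$ for $m$ large, and the cells $\{P_{m,j}\}_{j=1}^{n_m}$ partition $Q_m$; hence any $\bm{x}\in\omega^+(\bm{x}_0)$ lies in some $P_{m,j}\subseteq B_{cr_m}(C_{P_{m,j}})$, giving $\|\bm{x}-C_{P_{m,j}}\|\le cr_m$, so the first term is at most $cr_m$. For the ``centroid proximity'' term, fix $j$ and choose any $\bm{y}\in P_{m,j}\subseteq Q_m$; part 4 gives $\|C_{P_{m,j}}-\bm{y}\|\le cr_m$, while the width bound (part 1) together with $\omega^+(\bm{x}_0)\subseteq Q_m$ yields $d(\bm{y},\omega^+(\bm{x}_0))\le w_m< r_m$. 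The triangle inequality then gives $\inf_{\bm{x}\in\omega^+(\bm{x}_0)}\|C_{P_{m,j}}-\bm{x}\|\le (c+1)r_m$. Combining the two estimates, $d(\omega^+(\bm{x}_0),P_m)\le (c+1)r_m$, and since $r_m\to 0$ by part 3, the Hausdorff distance vanishes. This is exactly where parts 1 and 3 are used; part 2 (nesting) is not needed for the convergence itself and only supports the monotone construction of the regions.

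The hard part will not be these estimates, which are essentially triangle inequalities, but rather justifying the geometric facts they rest on: that $\omega^+(\bm{x}_0)\subseteq Q_m$ and that every point of $Q_m$ lies within the width $w_m$ of $\omega^+(\bm{x}_0)$. These are precisely the places where the restriction to planar limit sets homeomorphic to a line or a circle, the density $\overline{\Xi}=\omega^+(\bm{x}_0)$, and the scaling-based construction of $Q_m$ from the finite samples $\Xi_m$ all enter. I would verify containment by showing that, once the sampling is fine enough, the inner and outer curves obtained by scaling the polygon through $\Xi_m$ trap the limit set between them, and verify the width-to-distance bound from the thin annulus or strip geometry of $Q_m$. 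Finally, the existence claim embedded in part 4 (that for each $r_m$ an integer $n_m$ exists making every CVT cell fit inside a ball of radius $cr_m$) is taken as a hypothesis rather than derived; without it the cell-diameter control driving both directed distances would be unavailable.
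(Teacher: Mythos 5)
Your proof is correct and takes essentially the same route as the paper: both arguments rest on the width bound $w_m < r_m$ (part 1), the ball containment $P_{m,j} \subseteq B_{c r_m}(C_{P_{m,j}})$ (part 4), the triangle inequality, and $r_m \to 0$ (part 3), and both arrive at the same bound $(1+c)r_m$ on $d(\omega^+(\bm{x}_0), P_m)$. The only cosmetic difference is that you bound the two directed Hausdorff distances separately, while the paper chains the Hausdorff triangle inequality through the intermediate set $Q_m$ via $d(\omega^+(\bm{x}_0),Q_m) < r_m$ and $d(Q_m,P_m) < c r_m$; the geometric facts you flag as needing justification ($\omega^+(\bm{x}_0) \subseteq Q_m$ and the width-to-distance interpretation) are simply asserted in the paper's proof as well.
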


\begin{proof}
We fist note that the centroid of each polytope is contained in $B_{c r_m}(C_{P_{m,j}})$ since the ball is convex. Since the maximum width of the region $w_m$ satisfies $w_m < r_m$, it is clear that $d(\omega^+(\bm{x}_0),Q_m) < r_m$. On the other hand, since the ball $B_{c r_m}(C_{P_{m,j}})$ contains the polytope $P_{m,j}$, we have $d(P_{m,j},\{C_{P_{m,j}}\}) < cr_m$ for any $j = 1,\ldots,n_m$. Note that the bound $cr_m$ on $d(P_{m,j},\{C_{P_{m,j}}\})$ is uniform. Also, recall that $Q_m = \cup_{j=1}^{n_m} P_{m,j}$, and $P_m = \cup_{j=1}^{n_m} \{ C_{P_{m,j}} \}$. Thus, we have $d(Q_m,P_m) < cr_m$. Using triangle inequality, we get $d(\omega^+(\bm{x}_0), P_m) < (1 + c)r_m$. Since $ r_m \to 0$ as $m \to \infty$, we conclude that the centroids approach the positive limit set as $m \to \infty$.
\end{proof}

The assumptions in the above theorem are very strong because of Condition \ref{cond_lloyds}. It is possible to relax some of the assumptions by considering the geometric properties of the partitions. But, from a practical standpoint, the maximum number of samples of the positive limit set is limited by the measurement equipment. This theorem provides a framework for an implementation that agrees with intuition - if new samples of the positive limit set are measured, choose $Q_m$ such that $r_m$ is reduced and number of kernel centers $n_m$ are increased. For a given $r_m$, the number of kernel centers cannot be indefinitely increased. Consider the example in Figure \ref{fig_LloydsNumErr}. Due to numerical errors, the Lloyd's algorithm converges to a CVT in which the kernel centers do not lie on the positive limit set when $n_m$ is large. On the other hand, the term $r_m$ cannot be decreased indefinitely, since the region $Q_m$, built based on finite number of samples, may no longer contain the positive limit set. Thus, the number of samples collected restrict the effectiveness of this method.

To avoids CVTs that are similar to the one given in Figure \ref{fig_LloydsNumErr} (b), we introduce the following condition. Let $\bar{Q}$ represent the outer rectangle that is contained in $\mathbb{R}^2$ in Figure \ref{fig_LloydsNumErr} and let $\bar{\mathcal{V}}_l$ represent the CVT made up of $l$ horizontally stacked identical rectangles. Figure \ref{fig_LloydsNumErr} (a) depicts the CVT $\bar{\mathcal{V}}_5$ of $\bar{Q}$. The following condition inherently ensures that the kernel centers are evenly distributed in or near the positive limit set.

\begin{condition}
\label{cond_CVTtype}
Let $l = 1,\ldots,n_m$. For any possible $l$, consider an arbitrary collection of $l$ polytopes $P_{m,i_1}, \ldots, P_{m,i_l}$ in the partition $\mathcal{V}(P_m)$ such that each polytope is adjacent to at least one other polytope in the collection. The union of edges $\mathbb{E}(\mathcal{V}(P_m),P_{m,i_1} \cup \ldots \cup P_{m,i_l})$ is homeomorphic to the union of edges $\mathbb{E}(\bar{\mathcal{V}}_l,\bar{Q})$ of the CVT $\bar{\mathcal{V}}_l$.
\end{condition}

\begin{figure}[htb!]
\centering
\includegraphics[scale = 0.5]{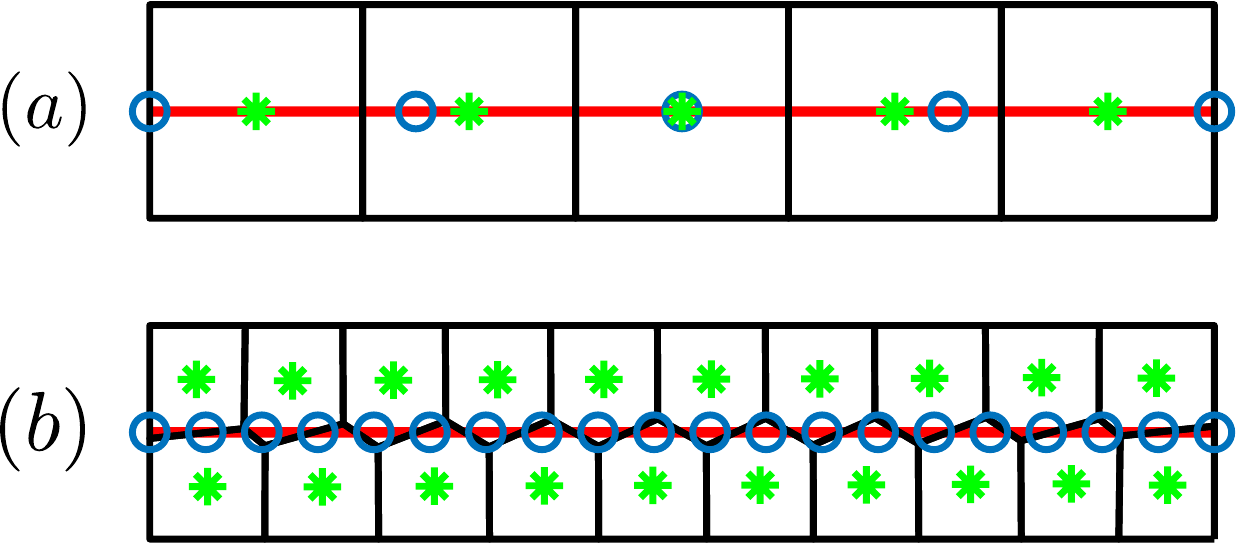}
\caption{Increasing the number of kernel centers leads to completely different types of CVT while using the same Lloyds algorithm. The markers $\mathrm{o}$ and $*$ represent the initial positions and final converged positions of the kernel centers, respectively. The red line represents the limit set.}
\label{fig_LloydsNumErr}
\end{figure}

\begin{algorithm}
\KwIn{$\Xi_m$, $n_m$}
\KwOut{$P_m$}
Choose the constant $r_m$. Construct region $Q_m$ such that the positive limit set $\omega^+(\bm{x}_0)$ is contained in $Q_m$.
\\
Choose $n_m$ separate points in the convex hull of $Q_m$.
\label{algStp_seppts}
\\
Run the Lloyd's algorithm using the points chosen in Step \ref{algStp_seppts} as the initial points.
{
\begin{enumerate}[label=(\roman*)]
\item Calculate the Voronoi partitions $\mathcal{V}(P_m)$ for the $n_m$ points.
\item Calculate the centroids $C_{P_{m,1}},\ldots,C_{P_{m,n_m}}$ of the Voronoi partitions $\mathcal{V}(P_m)$.
\item Set $P_m = \{ C_{P_{m,1}},\ldots,C_{P_{m,n_m}} \}$ and go back to the Step 3 (i).
\end{enumerate}
}
The above steps are repeated until convergence is achieved. \label{algStp_Lloyds}
\\
If the CVT from Step \ref{algStp_Lloyds} does not satisfy Condition \ref{cond_CVTtype}, choose a constant $s_m$ such that $s_m < r_m$. Set $r_m = s_m$ and go back to Step \ref{algStp_seppts}.
\linebreak
\noindent If the CVT satisfies Condition \ref{cond_CVTtype}, choose the set of centroids of the CVT $P_m$ as the kernel centers for the adaptive estimator.
\caption{CVT based kernel center selection}
\label{alg_CVT}
\end{algorithm}

Algorithm \ref{alg_CVT} shows the steps involved in implementing this method. Step \ref{algStp_Lloyds} in the algorithm can be implemented using commercially available tools like MATLAB, which makes the algorithm extremely straightforward for implementation. The inputs to the algorithm are the samples $\Xi_m$ and the number of kernel centers $n_m$. We iteratively choose $r_m$ in the algorithm until Condition \ref{cond_CVTtype} is satisfied. The output of the algorithm is the set of kernel centers, which can be implemented in the adaptive estimator algorithm.

%
\section{Method 2: Based on Kohonen Self-Organizing Maps}
\label{sec_Koho}
The second approach presented in this paper is based on Kohonen self-organizing maps (SOMs), which were first introduced by Teuvo Kohonen. \cite{Kohonen2012} Self-organizing maps are typically used for applications like clustering data, dimensionality reduction, pattern recognition, and visualization. Thus, given a set of samples in the input space, these maps can be used to produce a collection of neurons on a low-dimensional manifold that represents the samples' distribution. In our problem, the input space is the state-space, and the samples are the state measurements. The neurons on the low-dimensional manifold are the kernels centers. The position of the kernel centers in the state-space are represented by the weight vectors that the SOM algorithm generates.

One of the critical features of self-organizing maps is that the underlying topology between the input space (the original dataset) and the output space is maintained. Intuitively, points that are close in the original dataset are mapped to neurons that are close to each other (in some predefined metric). For our problem, we want the kernel centers to be evenly spaced in the state-space in addition to being close to the measurement samples. To ensure this, we choose the initial set of kernel centers on a manifold that is homeomorphic to the positive limit set. This requires knowledge of the topology of the positive limit set. Before going over the details, let us take a look at the theory of Kohonen self-organizing maps. 

Suppose we have the set of samples $\Xi_m = \{ \bm{\xi}_{m,1}, \ldots, \bm{\xi}_{m,q_m} \}$. In the context of this paper, the set $\Xi_m$ is the set of samples of the positive limit set $\omega^+(\bm{x}_0)$. Let $n_m$ represent the number of kernel centers $\bm{p}_{m,1},\ldots,\bm{p}_{m,n_m}$ we want to choose. We associate the $i^{th}$ kernel center with a weight vector $\bm{p}_{m,i}(t) \in \mathbb{R}^d$ for $i = 1,\ldots,n_m$. Note that the weight vectors depend on time and at any given instant in time $t$, the weight vector is an element of $\mathbb{R}^d$. The neighborhood function $\mathcal{N}_j$ defines neighbors of the center $j$. The choice of the neighborhood function depends on the topology we want to define on the kernel centers. The neurons (or the kernel centers) are often chosen in the form of a linear grid or a $2D$ grid, and the neighbors in such grids are naturally defined. The Kohonen self-organizing map's implementation involves the following steps. We first randomly choose a sample $\bm{\xi}_{m,k}$ from the sample set $\Xi_m$, where $k \in \{ 1, \ldots, q_m \}$. We then determine the winning neuron - the kernel center that is closest to the sample $\bm{\xi}_{m,k}$. The winning neuron $i$ at a given instant $t$ is the one which satisfies the condition
\begin{align}
\label{eq_SOMwinNode}
d( \bm{\xi}_{m,k}, \bm{p}_{m,i}(t) ) \leq d( \bm{\xi}_{m,k}, \bm{p}_{m,j}(t) )
\end{align}
for $j = 1, \ldots, n_m$. We now update the weight vectors using the evolution equation
\begin{align}
\label{eq_SOMUpdate}
\frac{d \bm{p}_{m,j}(t)}{dt} = \beta_j(t) \mathcal{N}_j(t,i) \left( \bm{\xi}_{m,k} - \bm{p}_{m,j}(t) \right)
\end{align}
for $j = 1, \ldots, n_m$. In the above equation, $0 \leq \beta_j(t) < 1$ defines the rate of convergence of the center $j$. The neighborhood function determines which neighbors of the node $i$ get updated. For convergence, we require that $\beta_j(t) \to 0$ and $\mathcal{N}_j(t,i) \to 0$ as $t \to \infty$, for any $i, j \in \{ 1, \ldots, n_m \}$. 
While implementing this algorithm, we can observe the SOM goes through a \emph{topological ordering phase} during which the grid of neurons try to match the patterns if the sample in the input space before convergence.

\textit{Note:} The self-organizing map algorithm is easy to implement. However, many theoretical aspects of these maps, like convergence, remain unanswered for the general case. Researchers have studied and proved the theory for the 1D linear array case, when the nodes are arranged on a line. A review of some of the theoretical results are in \cite{Cottrell1998}.

%
\subsection{Implementation}
\label{ssec_KohoImp}
To implement Kohonen self-organizing maps for kernel center selection, we modify the above-discussed algorithm. In some dynamical systems, the trajectory approaches the positive limit set but is never contained in the set. In such cases, we only have measurements of the states and not the samples of positive limit set. Furthermore, arbitrary selection of state-samples might result in picking points away from the positive limit set. This in turn affects the convergence of the kernel centers to points inside the positive limit set. Hence, as opposed to choosing random samples $\bm{\xi}_{m,j}$ from the set $\Xi_m$, we use the state measurement $\bm{x}(t)$ at a given time instant to determine the winning node. We replace the term $\bm{\xi}_{m,j}$ with $\bm{x}(t)$ in Equations \ref{eq_SOMwinNode} and \ref{eq_SOMUpdate}. This change enables us to implement this method for a more general class of systems in real-time. 

A Kohonen self-organizing map algorithm gives a low-dimensional representation of all samples (which include the ones that are outside the limit set). On the other hand, the objective of our problem is to choose kernel centers on the positive limit set such that they are spaced as uniformly as possible. To ensure this, we choose the topology of the output space to match that of the positive limit set. In other words, we choose the initial kernel centers and the neighborhood function such that the topology is homeomorphic to the positive limit set. For example, if the positive limit set is a closed curve in $\mathbb{R}^2$, the initial weight vectors can be points on the unit circle, and the neighborhood function can be defined as 
\begin{align}
    \mathcal{N}_j(t,i) = 
    \left \{
    \begin{array}{cc}
        1 & \text{if } j \in \mathcal{T}, \\
        0 & \text{if } j \notin \mathcal{T},
    \end{array}
    \right.
    \label{eq_KohoNbhd}
\end{align}
where the set $\mathcal{T}$ is defined as $\mathcal{T} = \{i-1,i,i+1\}$ for $i \neq 1, n_m$. For $i = 1$ and $i = n_m$, we choose $\mathcal{T} = \{n_m,1,2\}$ and $\mathcal{T} = \{ n_m - 1, n_m, 1 \}$, respectively.

On top of the above modifications, we enforce the condition that, when we have samples of the positive limit set, the number of kernel centers or neurons $n_m$ should be strictly less than $q_m$, the number of samples in the set $\Xi_n$. When $n_m$ is equal to $q_m$, the kernel centers can converge to the samples. In the case where the positive limit set is a closed curve, this can be interpreted as a solution to the traveling salesman problem. \cite{Brocki2007} To avoid convergence to the samples, we impose the above dimensionality reduction condition. 

Algorithm \ref{alg_Koho} shows the steps involved in implementing this method. We present the algorithm for the case where the positive limit set is a closed curve. However, the algorithm can be extended easily for other types of positive limit sets. The neighborhood function for this case, defined by Equation \ref{eq_KohoNbhd}, is inherently accounted in the algorithm.


\begin{algorithm}
\KwIn{$\bm{x}(t)$, $q_m$}
\KwOut{$\{ \bm{p}_{m,1}(T), \ldots, \bm{p}_{m,n_m}(T)  \}$}
Choose the number of kernel centers $n_m$ such that $n_m < q_m$. If $p_m = 0$, choose a positive integer for $n_m$.
\\
Choose $\beta_j$ such that $0 \leq \beta_j(t) < 1$ for $t \in [0,\infty)$ and $\beta_j(t) \to 0$ as $t \to \infty$ for all $j = 1, \ldots, n_m$.
\\
Initialize the weight vectors $\bm{p}_{m,j}$ as the points on a circle contained inside the closed curve.
\\
Implement the Kohonen SOM algorithm for $t \in [0,T]$ for some $T>0$.
{
\begin{enumerate}[label=(\roman*)]
    \item At time $t$, determine the winning neuron $i$ that satisfies the condition
    $$
    d(\bm{x}(t) - \bm{p}_{m,i}(t)) \leq d(\bm{x}(t) - \bm{p}_{m,j}(t))
    $$
    for $j = \{ 1, \ldots, n_m \}$. 
    \item
    Define the set $\mathcal{T}$ as $\mathcal{T} = \{i-1,i,i+1\}$ for $i \neq 1, n_m$. For $i = 1$ and $i = n_m$, choose $\mathcal{T} = \{n_m,1,2\}$ and $\mathcal{T} = \{ n_m - 1, n_m, 1 \}$, respectively.
    \item Update the weight vectors based on
    \begin{align*}
    \frac{d \bm{p}_{m,j}(t)}{dt} = 
    \left \{
    \begin{array}{lc}
    \beta_j(t) \left( \bm{x}(t) - \bm{p}_{m,j}(t) \right) & \text{if } j \in \mathcal{T}
    \\
    0 & \text{if } j \notin \mathcal{T}
    \end{array}
    \right.
    \end{align*}
    for $j = 1, \ldots, n_m$. This update happens until next state measurement. Go back to Step 4 (i) after the update.
\end{enumerate}
}
\caption{Kohonen SOM based Kernel Center Selection - Closed Curve Case}
\label{alg_Koho}
\end{algorithm}

Recall that in the case of CVT based method presented in the previous section, the samples are contained in the positive limit set, which meant the trajectory was contained in the positive limit set or converged to the set in finite time. Since we use the state measurement for the Kohonen SOM based approach, we can relax some of the requirements of the CVT based method. It is sufficient for the trajectory to converge to the positive limit set as $t \to \infty$. However, it is important to choose $\beta_j(t)$ such that the state trajectory converges to the positive limit set faster than the rate at which $\beta_j(t) \to 0$. If this is violated, the kernel centers will not converge to the positive limit set.

Note, in the Lloyd's algorithm, the distance between any two kernel centers is inherently ensured to remain uniform by the algorithm. This can be attributed to the way partitions are defined and the selection of the mass density function. On the other hand, the distribution of the converged kernel centers from the Kohonen SOM based algorithm depends on the distribution of the sampled measurements. If the state measurements are concentrated on a particular neighborhood of the positive limit set, implementing Algorithm \ref{alg_Koho} will result in the kernel centers being concentrated in or near the neighborhood.

%
\section{Numerical Illustration of Center Selection Methods}
\label{sec_examples}
We illustrate the effectiveness of the two approaches explained above for two examples in this section. The first example is the undamped piezoelectric oscillator example considered in Section \ref{ssec_sampex}. The positive limit set in this case is almost symmetric about the axis after scaling of the states. The second example is a nonlinear oscillator which has a nonsymmetric positive limit set. We implement the above discussed methods for both cases and use the resulting kernel centers in the adaptive estimators. We use MATLAB \mcode{lloydsAlgorithm} function, developed by Aaron T. Becker's Robot Swarm Lab, for implementing Step \ref{algStp_Lloyds} of Algorithm \ref{alg_CVT}. 
The function expects the boundary of a polygon as input and hence we approximate the region $Q_m$ using a polygon as shown in Figures \ref{fig_piezoConv} and \ref{fig_nonOscEx}. In the adaptive estimator simulations, we use the Sobolev-Matern $3,2$ kernel given in Subsection \ref{ssec_sampex}.
%
\begin{figure*}
\centering
\begin{subfigure}{1\textwidth}
\centering
\includegraphics[scale = 0.6]{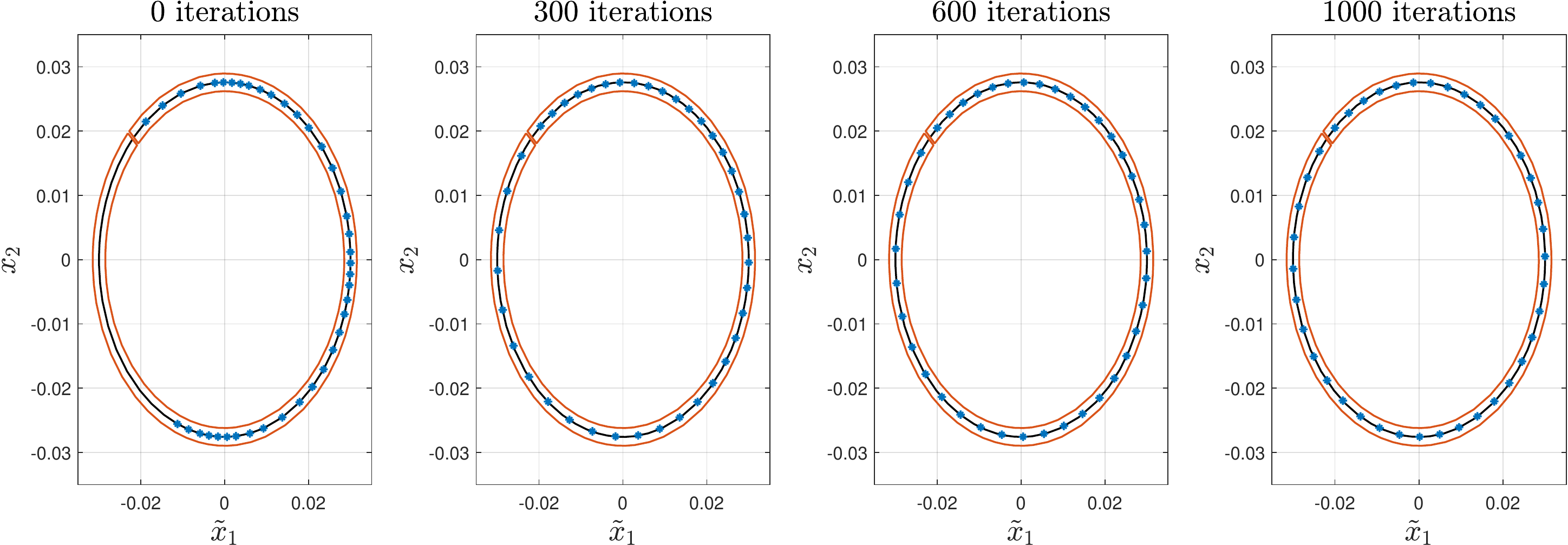}
\caption{Algorithm \ref{alg_CVT} output}
\label{sfig_piezoConvLloyds}
\end{subfigure}
\begin{subfigure}{1\textwidth}
\centering
\includegraphics[scale = 0.6]{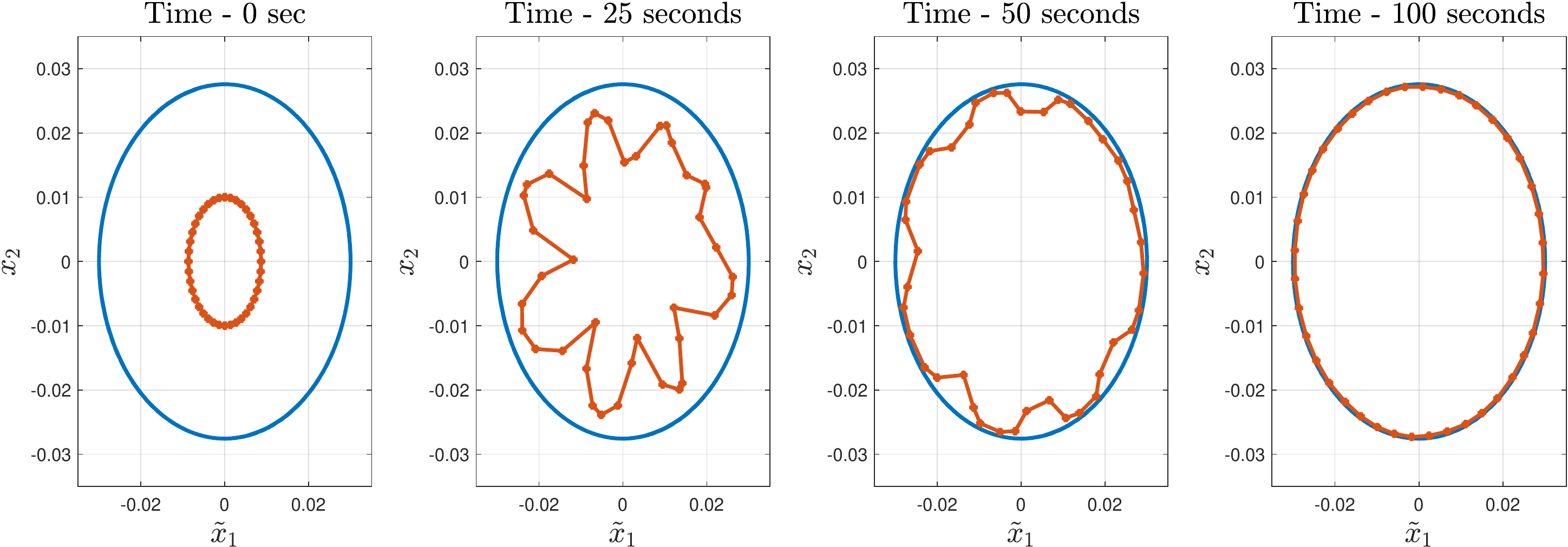}
\caption{Algorithm \ref{alg_Koho} output}
\label{sfig_piezoConvKoho}
\end{subfigure}
\caption{Algorithm outputs of Example \ref{ssec_NumExPiezo}. The marker $*$ and the blue line represent the kernel centers and the limit set, respectively.}
\label{fig_piezoConv}
\end{figure*}
\begin{figure}
\centering
\includegraphics[scale = 0.5]{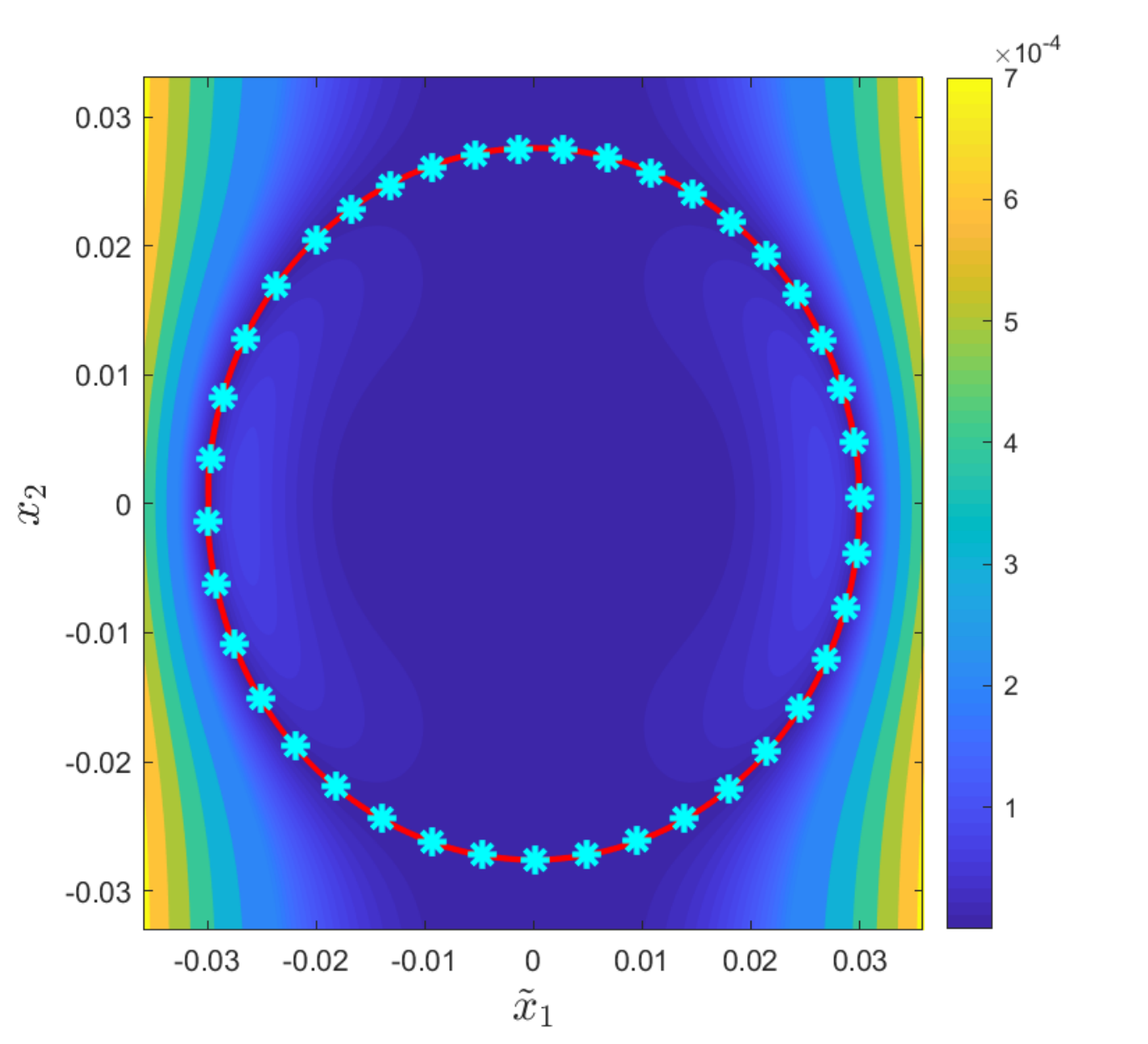}
\caption{Kernel centers for Example \ref{ssec_NumExPiezo} selected using Algorithm \ref{alg_CVT}  - Pointwise error $|f(\bm{x}) - \hat{f}_n(T,\bm{x})|$ obtained from adaptive estimator. The marker $*$ and the red line represent the kernel centers and the limit set, respectively.}
\label{fig_piezopwErrLloyds}
\end{figure}
\begin{figure}
\centering
\includegraphics[scale = 0.5]{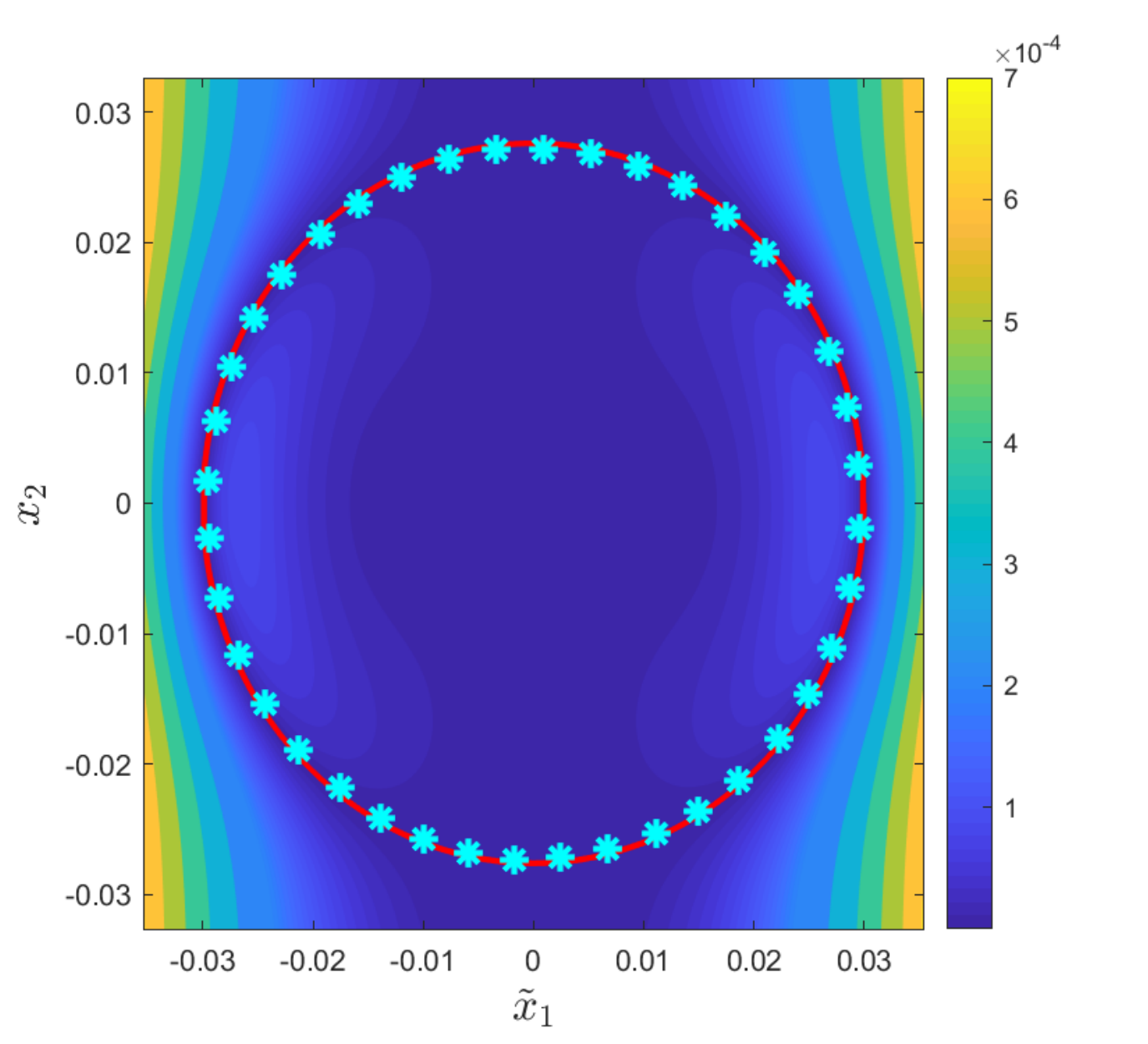}
\caption{Kernel centers for Example \ref{ssec_NumExPiezo} selected using Algorithm \ref{alg_Koho}  - Pointwise error $|f(\bm{x}) - \hat{f}_n(T,\bm{x})|$ obtained from adaptive estimator. The marker $*$ and the red line represent the kernel centers and the limit set, respectively.}
\label{fig_piezopwErrKoho}
\end{figure}
%
\subsection{Example 1: Nonlinear Piezoelectric Oscillator}
\label{ssec_NumExPiezo}
The first example we consider is the undamped nonlinear piezoelectric oscillator whose motion is governed by the Equation \ref{eq_piezomodel}. We use the same values for the structural parameters as the ones used in the example in Section \ref{ssec_sampex}. We set the scaling factor $S = 0.02$ and initialized the states at $\bm{x}_0 = \{ \tilde{x}_1(0),x_2(0) \}^T = \{ 0.03,0 \}^T$. Figure \ref{fig_piezoConv} shows how the kernel centers evolve while using Algorithms \ref{alg_CVT} and \ref{alg_Koho}. We set the number of kernel centers as $n_m = 40$ for both of the algorithms. For implementing Algorithm \ref{alg_CVT}, we first collect the set of samples $\Xi_m$ of the positive limit set $\omega^+(\bm{x}_0)$. By connecting the samples in $\Xi_m$ with straight lines, we form a closed curve which is represented by the blue line in Figure \ref{sfig_piezoConvLloyds}. We then scale the closed curve by a factor of $1.1$ and $0.9$, thus forming concentric larger and smaller closed curves. We chose the region between these two closed curves as $Q_m$. Dividing the region $Q_m$ as shown in Figure \ref{sfig_piezoConvLloyds} results in a polygon, thus enabling us to use the \mcode{lloydsAlgorithm} function in MATLAB. While implementing Algorithm \ref{alg_CVT}, we chose $\beta_j(t) = 0.99$ for $t \leq 1000$ s and $\beta_j(t) = 0$ for $t> 1000$ s for all $j$. As evident from Figure \ref{fig_piezoConv}, the CVT based approach and the Kohonen SOM based approach take $1000$ iterations and $100$ seconds, respectively to converge. It is clear that the kernel centers are more uniformly spaced than those picked arbitrarily in the example in Subsection \ref{ssec_sampex}. We subsequently use the converged kernel centers and simulate the adaptive estimator algorithm for $T = 300$ seconds. For the adaptive estimator, we set $l = 0.006$, $\Gamma = 0.001$ and initialized the parameters at $\alpha_i(t) = 0.0001$ for $i = 1\ldots,n_m$. Figures \ref{fig_piezopwErrLloyds} and \ref{fig_piezopwErrKoho} shows the pointwise error $|f(\bm{x}) - \hat{f}(T,\bm{x})|$ obtained after using the kernel centers from the CVT and Kohonen SOM based approach. As expected, both the plots show that the error is $\mathcal{O}(10^{-4})$ over the positive limit set.

%
\subsection{Example 2: Nonlinear Oscillator}
\label{ssec_NumExNonOsc}
For the second example, we consider a nonlinear oscillator whose motion is governed by the equation
\begin{align}
    \begin{Bmatrix}
    \dot{x}_1 \\ \dot{x}_2
    \end{Bmatrix}
    &=
    \underbrace{\begin{bmatrix}
    0 & 1 \\
    -1 & 0.5
    \end{bmatrix}}_{A}
    \begin{Bmatrix}
    x_1 \\ x_2
    \end{Bmatrix}
    +
    \underbrace{\begin{Bmatrix}
    0 \\ 1
    \end{Bmatrix}}_{B}
    \underbrace{ 
    \left(
    -x_1^2 x_2
    \right)
    }_{f(\bm{x}(t))}.
    \label{eq_nonOsc}
\end{align}
This system exhibits a more complex behavior than that in Example \ref{ssec_NumExPiezo}. Firstly, the state trajectory is not contained in the positive limit set $\omega^+(\bm{x}_0)$, which is depicted as the blue, solid line in Figure \ref{fig_nonOscEx}. Note that the positive limit set is not symmetric. Refer Example 9.2.2 in \cite{Hubbard2013} for a detailed analysis of the nonlinear behavior of the oscillator. Here, we are interested in estimating the nonlinear function $f(\bm{x}(t)) = -x_1^2 x_2$.
\begin{figure*}
\centering
\begin{subfigure}{1\textwidth}
\centering
\includegraphics[scale = 0.6]{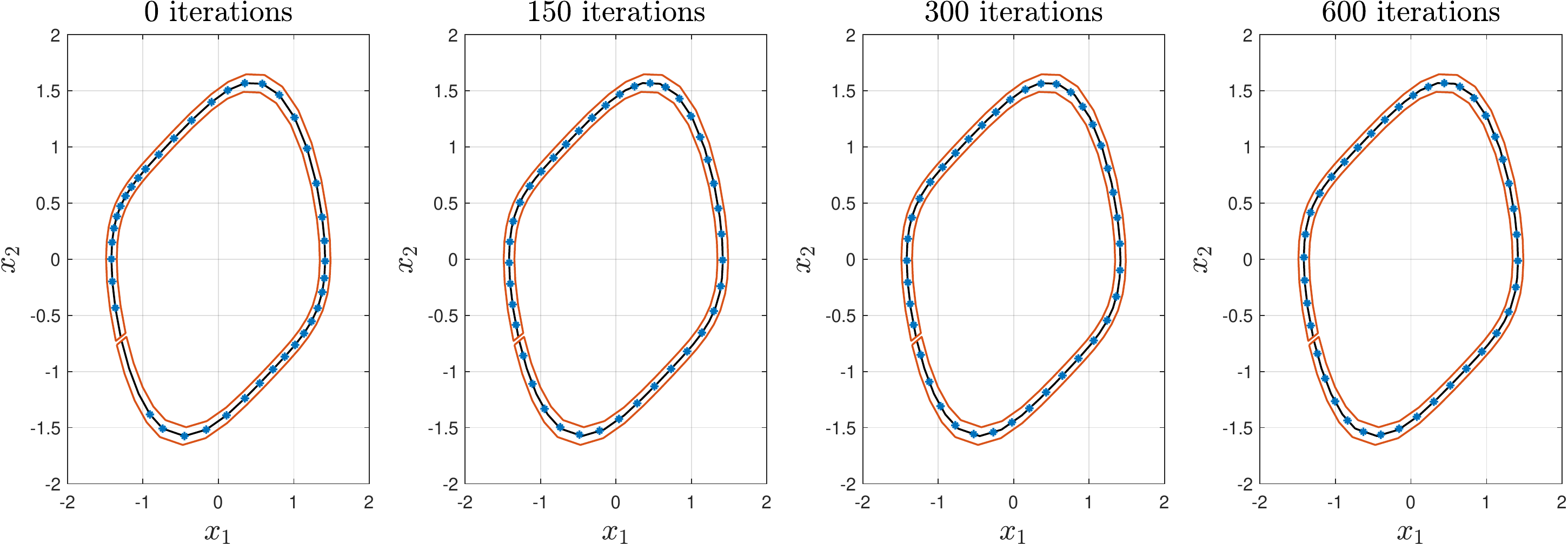}
\caption{Algorithm \ref{alg_CVT} output}
\label{sfig_NonOscConvLloyds}
\end{subfigure}
\begin{subfigure}{1\textwidth}
\centering
\includegraphics[scale = 0.6]{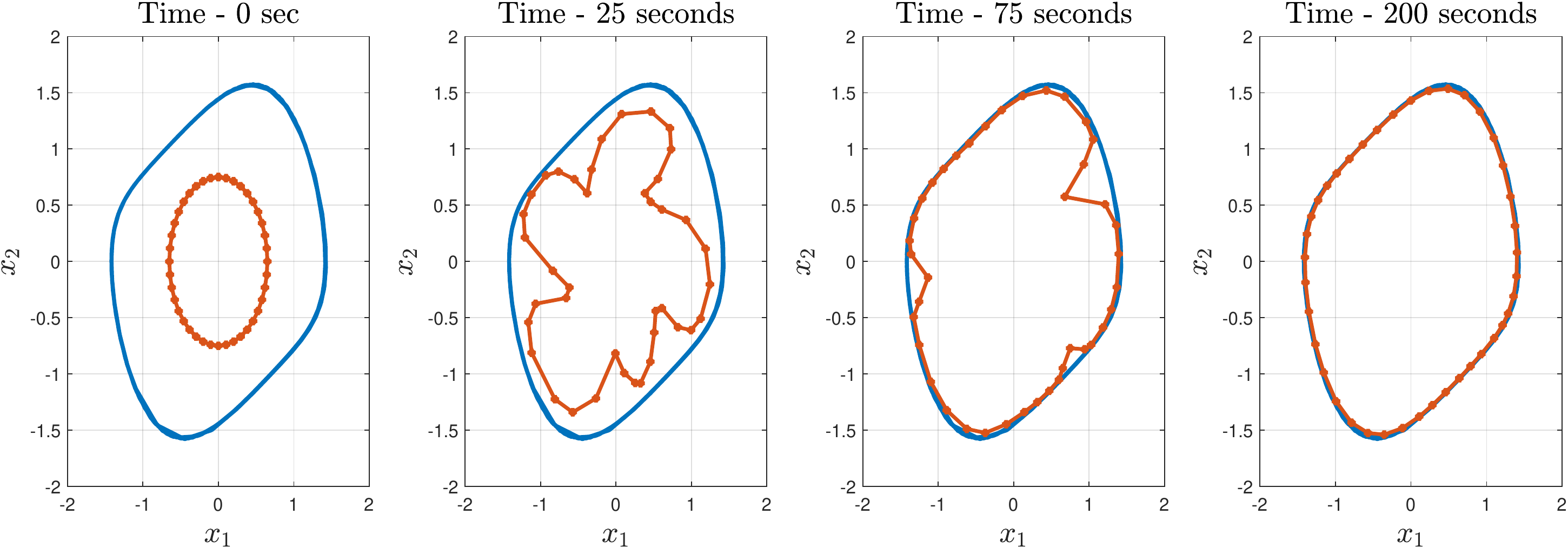}
\caption{Algorithm \ref{alg_Koho} output}
\label{sfig_NonOscConvKoho}
\end{subfigure}
\caption{Algorithm outputs of Example \ref{ssec_NumExNonOsc}. The marker $*$ and the blue line represent the kernel centers and the limit set, respectively.}
\label{fig_nonOscEx}
\end{figure*}
Figure \ref{fig_nonOscEx} shows the implementation of the CVT based and Kohonen SOM based kernel center selection methods for this problem. In both cases, we fixed number of kernel center as $n_m = 40$ and initialized the states at $\bm{x}_0 = \{ x_1(0), x_2(0) \}^T = \{0 , 2 \}^T$. The polygon in Figure \ref{sfig_NonOscConvLloyds} for the CVT based approach is built similar to the method used for Example \ref{ssec_NumExPiezo}. For the Kohonen SOM approach, we set $\beta_j(t) = 0.99$ for $t \leq 1000$ s and $\beta_j(t) = 0$ for $t> 1000$ s for all $j$. As evident from the figures, the CVT and Kohonen SOM methods take $600$ iterations and $200$ seconds, respectively for convergence of the kernel centers. It is clear that the kernel centers from the CVT based algorithm are more uniformly placed that the output of the Kohonen SOM algorithm. This can be attributed to the fact the state measurement samples are not uniformly distributed and to the fact that the CVT method makes strong assumptions about the structure of $Q_m$. Since the distribution of the state measurement affect the results of the Kohonen SOM based approach, the kernel centers are not uniform in this case. However, when the kernel centers from these algorithms are implemented in the adaptive estimator, we obtain convergence on the positive limit set. Figures \ref{fig_nonOscpwErrLloyds} and \ref{fig_nonOscpwErrKoho} shows the pointwise error $|f(\bm{x}) - \hat{f}(T,\bm{x})|$ after implementing the adaptive estimator for $T = 300$ seconds using the kernel centers from the CVT and Kohonen SOM based kernel center selection approach, respectively. We set $l = 0.5$, $\Gamma = 0.001$ and initialized the parameters at $\alpha_i(t) = 0.0001$ for $i = 1\ldots,n_m$. As in Example \ref{ssec_NumExPiezo}, the error is the smallest over the positive limit set.
\begin{figure}
\centering
\includegraphics[scale = 0.5]{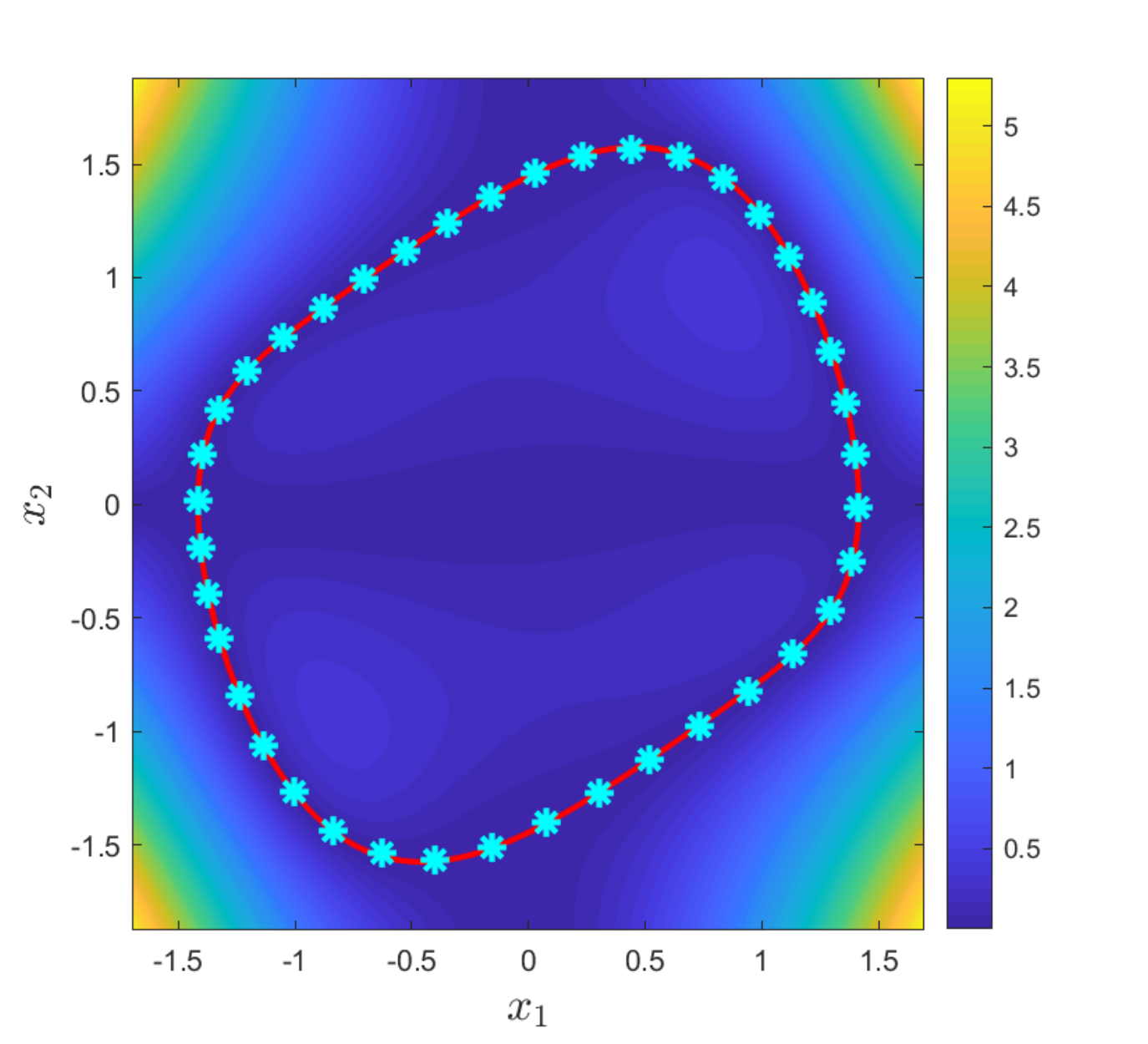}
\caption{Kernel centers for Example \ref{ssec_NumExNonOsc} selected using Algorithm \ref{alg_CVT}  - Pointwise error $|f(\bm{x}) - \hat{f}_n(T,\bm{x})|$ obtained from adaptive estimator. The marker $*$ and the red line represent the kernel centers and the limit set, respectively.}
\label{fig_nonOscpwErrLloyds}
\end{figure}
\begin{figure}
\centering
\includegraphics[scale = 0.5]{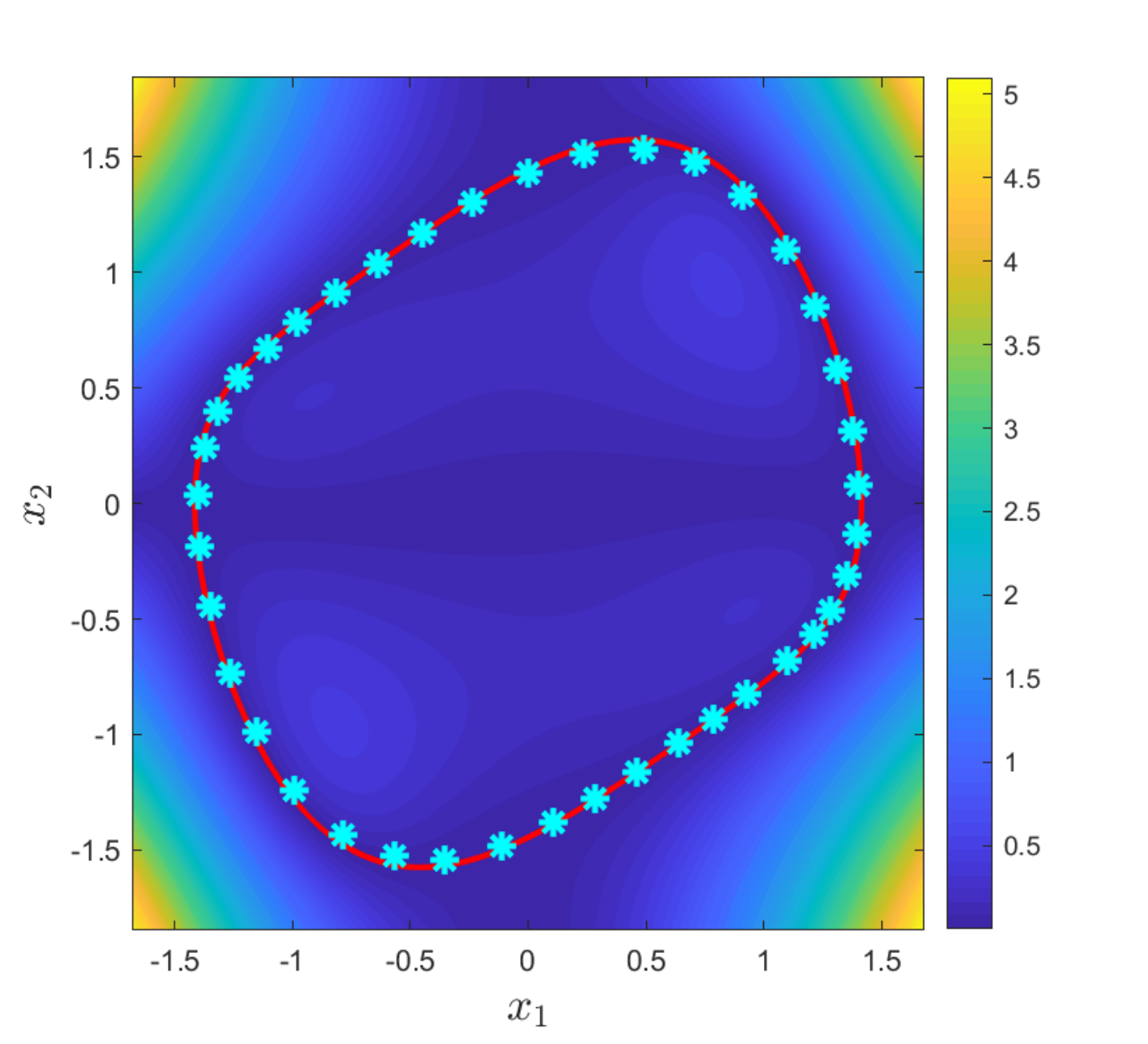}
\caption{Kernel centers for Example \ref{ssec_NumExNonOsc} selected using Algorithm \ref{alg_Koho}  - Pointwise error $|f(\bm{x}) - \hat{f}_n(T, \bm{x})|$ obtained from adaptive estimator. The marker $*$ and the red line represent the kernel centers and the limit set, respectively.}
\label{fig_nonOscpwErrKoho}
\end{figure}
%
%

%

%
\section{Conclusion}
In this paper, we developed criteria for kernel center selection based on the theory of infinite-dimensional adaptive estimation in reproducing kernel Hilbert spaces. We introduced two methods that use this criteria for kernel center selection. These methods provide a simple way to choose kernel centers for a specific class of nonlinear systems - systems in which state trajectory regularly visits the neighborhoods of the positive limit set. We illustrated the effectiveness of both algorithms using practical examples. The approaches discussed in this paper assume a fixed number of kernel centers. It would be of great interest to develop techniques that iteratively add kernel centers in real-time while accounting for the persistence of excitation and fill-distance conditions.

\bibliographystyle{IEEEtran}
\bibliography{j_adap_cent}

%








\end{document}